\newcommand{\eps}{\varepsilon}
\DeclareMathOperator{\col}{col}
\DeclareMathOperator{\polylog}{polylog}
\newtheorem{claim}{Claim}
\newtheorem{remark}{Remark}
\newtheorem{protocol}{Protocol}
\definecolor{protocol-bg}{RGB}{240,240,240}
\def\01{\{0,1\}}
\DeclareMathOperator{\negl}{negl}
\newcolumntype{M}[1]{>{\centering\arraybackslash}m{#1}}
\def\ID{\mathrm{ID}}
\def\mL{\mathcal{L}}
\renewcommand\thmcontinues[1]{Formal Statement}
 \definecolor{greenn}{rgb}{0.2,0.6,0.2}
\definecolor{bluue}{rgb}{0.7,0,0.3}
\newcommand{\bbF}{\mathop{\mathbb{F}}}
\newcommand{\fq}{{\textstyle \bbF_q}}
\newcommand{\tfree}{\Delta_\mathsf{free}}
\DeclareMathOperator{\poly}{poly}
\newcommand{\dzma}{\mbox{\sf dNIZK}\xspace}
\newcommand{\dma}{\mbox{\sf dNI}\xspace}
\newcommand{\con}{\mathcal{G}}
\newcommand{\cgst}{\textsc{congest}\xspace}
\def\submission{1}
\newcommand{\remove}[1]{}
\newcommand{\remove}[1]{}
\newcommand{\myparagraph}[1]{\smallskip \noindent {\em #1}.}
\begin{document}

\title{Distributed Non-Interactive Zero-Knowledge Proofs}

\author{Alex B. Grilo}
\email{alex.bredariol-grilo@lip6.fr}
\affiliation{%
  \institution{Sorbonne Université, CNRS, LIP6}
  \country{France}
}
\author{Ami Paz}
\email{}
\affiliation{%
  \institution{Université Paris-Saclay, CNRS, LISN}
  \country{France}
}
\author{Mor Perry}
\email{morpy@mta.ac.il}
\affiliation{%
  \institution{The Academic College of Tel-Aviv-Yaffo}
  \country{Israel}
}

\begin{abstract}
Distributed certification is a set of mechanisms that allows an all-knowing prover to convince the units of a communication network that the network's state has some desired property, such as being $3$-colorable or triangle-free. Classical mechanisms, such as \emph{proof labeling schemes} (PLS), consist of a message from the prover to each unit, followed by one round of communication between each unit and its neighbors. 
Later works consider extensions, called \emph{distributed interactive proofs}, where the prover and the units can have multiple rounds of communication before the communication among the units.
Recently, Bick, Kol, and Oshman (SODA '22) defined a zero-knowledge version of distributed interactive proofs, where the prover convinces the units of the network’s state without revealing any other information about the network’s state or structure. 
In their work, they propose different variants of this model and show that many graph properties of interest can be certified with them.

In this work, we define and study \emph{distributed non-interactive zero-knowledge proofs} (dNIZK);
these can be seen as a non-interactive version of the aforementioned model, and also as a zero-knowledge version of PLS. 
We prove the following: 

- There exists a dNIZK protocol for $3$-coloring with  $O(\log n)$-bit messages from the prover and $O(\log n)$-size messages among neighbors. This disproves a conjecture from previous work asserting that the total number of bits from the prover should grow linearly with the number of edges.  

- There exists a family of dNIZK protocols for triangle-freeness, that presents a trade-off between the size of the messages from the prover and the size of the messages among neighbors. Interestingly, we also introduce a variant of this protocol where the message size depends only on the maximum degree of a node and not on the total number of nodes, improving upon the previous non-zero-knowledge protocol for this problem.  

-  There exists a dNIZK protocol for any graph property in NP in the random oracle models, which is secure against an arbitrary number of malicious parties. Previous work considered compilers from PLS to distributed zero-knowledge protocol, which results in protocols with parameters that are incomparable to ours.
\end{abstract}

\maketitle

\clearpage
\tableofcontents

\clearpage

\section{Introduction}
Interactive (and non-interactive) proof systems are a cornerstone of complexity theory, having important applications from verifiable delegation of computation to inapproximability results. An interactive proof system consists of a two-party protocol, where an unbounded prover interacts with a polynomially-bounded verifier to convince the latter that a given statement is true. 
Such a system is characterized by two desired properties:
completeness, which states that if the statement is indeed true then the prover can make the verifier accept; 
and soundness, which requires that if the statement is false then the prover cannot make the verifier accept, except with small probability.

Given their importance, interactive proof systems have been extended to many settings, and in this work we are interested in two of these extensions. 
In a cryptographic context, we are also interested in a third property of interactive proof systems: zero-knowledge (ZK). 
This property asks that if the statement is indeed true, then the verifier does not learn anything from the interaction with the prover.
This is formalized (in the centralized setting) by asking for a randomized polynomial-time algorithm, called a {\em simulator}, whose output distribution is the same as the distribution of the transcript of the original protocol.
ZK protocols are fundamental in modern cryptography, being an important building block of several practically relevant constructions. 
For its practical applications, one of the most important family of zero-knowledge proofs are non-interactive zero-knowledge proofs (NIZK), where the prover sends a single message to the verifier.~\footnote{We notice that in NIZKs, there is also a trusted setup, such as the Random Oracle Model, common random/reference strings, and others. In the centralized setting, NIZKs are impossible for hard languages without such a trusted setup, i.e. in the {\em plain model}.}

In parallel, interactive proof systems have also been considered in the distributed setting, where a know-all prover wants to convince the nodes of a network of some statement regarding the network's state. As before, the properties that we seek in these distributed protocols are completeness and soundness. 
One of the main models in this setting are {\em proof labeling schemes} (PLS), where the prover sends a single message to each node, followed by one round of communication between each node and its neighbors. 
This notion was also extended to more general distributed interactive proofs where there is an interaction between the prover and the nodes.

Motivated by the abundance of applications of zero-knowledge proofs in intrinsically distributed settings (see, e.g.,~\cite{Ben-SassonCG0MTV14}), Bick, Kol, and Oshman~\cite{BickKO22} recently combined these two worlds together and proposed the notion of {\em
distributed zero-knowledge proofs} (dZK).
Their first contribution was to properly
{\em define} these objects, proposing two suitable definitions. In the first, the goal is to prevent the nodes from learning information from the prover's global view of the graph. 
In the second, which they call {\em strong} distributed zero-knowledge, the goal is to prevent the nodes to learn {\em
anything} from the protocol, including information that could be provided by
their neighbors. 
They then present distributed zero-knowledge protocols for several problems of interest such as $3$-coloring, spanning tree verification, and distributed problems with efficient PLS.

Since the number of rounds is an undesired bottleneck in communication protocols, our goal in this work is to achieve distributed zero-knowledge with optimal round complexity. 
Concretely, we study the non-interactive version of distributed zero-knowledge proofs, which we call $\dzma$.
In this version, the prover sends only one message to each node, and the nodes do not answer back. The nodes use one communication round in the network, and output their result. 
In addition, our notion of distributed zero-knowledge is the {\em strong} distributed zero-knowledge from previous work.
First, 
we show a $\dzma$ protocol for $3$-coloring with a total of $O(n \log n)$ bits of communication with the prover, disproving a conjecture from~\cite{BickKO22} on the required amount of communication for distributed ZK protocols for the problem. 
Second, we propose a $\dzma$  protocol for triangle-freeness, which presents an interesting trade-off between the size of messages from the prover and  and size of messages between neighbors in the graph.
Finally, we provide a universal $\dzma$ for any graph property in NP.

\subsection{Our Model}

We now survey our model; the formal definitions can be found in \cref{sec:prelim}.
We consider a protocol between a prover and a network of nodes, where the prover has full knowledge of the network while the nodes only know their neighborhood. 
The prover wants to convince the nodes that the network's communication graph has a desired property, using a protocol with the following structure.
\begin{enumerate}
    \item The prover sends a message $\sigma_v$ to each node $v$;
    \item Based on $\sigma_v$ and a random string $r$ shared by the nodes, $v$ sends a message $\gamma_{vu}$ to each neighbor $u$;
    \item Based on $r$, $\sigma_v$ and all the received messages $\gamma_{uv}$, node $v$ decides whether to accept or reject.
\end{enumerate}

We say such a protocol is a distributed non-interactive zero knowledge $\dzma(\ell, \mu)$
if $|\sigma_v|\in O(\ell)$ and $|\gamma_{vu}| \in O(\mu)$ for all nodes $u,v$, and in addition it has the following properties.

First,
we require the two standard properties of proof systems: completeness, which states that if the network graph indeed has the desired property, the prover can convince all nodes about it; and soundness, which says that if the network does not have the desired property, at least one node will reject with high probability, for all possible strategies of the prover.
And second, the zero-knowledge property: all the messages received by any node in the protocol can be {\em simulated}. 
More precisely, we want a randomized algorithm that without communicating with the neighbors, 
outputs at node $v$ the values $(\tilde{r},\tilde{\sigma}, (\tilde{\gamma}_{uv})_u)$ following the same distribution of the messages received by $v$ in the real protocol.

As discussed in more detail in~\cite{BickKO22}, some well-knowledge PLS/distributed interactive proofs are trivially zero-knowledge such as $2$-coloring: if the graph is $2$-colorable, when a node receive its color, it knows the color of all of its neighbors (and we can turn this intuition into a simulator). However, achieving zero-knowledge for other problems is much harder. For instance, the standard PLS for $3$-coloring is not zero-knowledge:  each node learns the colors of all of its neighbors and this leaks their correlation in some cases.

We focus on zero-knowledge against a \emph{single, malicious} node. 
Our definition can be extended to multiple nodes, and we state it explicitly when this stronger security holds.

\subsection{Results and Techniques}
\label{sec:our-results}

\subsubsection{Coloring and Polynomial Sharing}

We start in \cref{sec:threecol-polynom}, where we study distributed ZK for $3$-colorability. 
This was a main case study in the previous work~\cite{BickKO22}, 
where they provide a protocol with $3$ communication rounds  with the prover ({\sf MAM})~\footnote{As they mention, their protocol can be compressed to a single round of communication if the prover and the nodes share correlated randomness, i.e. a pair of values that are sampled from a joint distribution and shared among the prover and nodes. This is a rather strong assumption, and it allows, for example, to achieve some primitives that are impossible even in the random oracle model~\cite{Passs05,IshaiKMOP13}.}.
Each node in their protocol communicates $\Theta(\Delta)$ bits with the prover, for a total of $\Theta(m)$ bits in the worst case.
They then present an open question: can one show that any distributed zero-knowledge protocol for
$3$-coloring requires 
$\Omega(m)$ communication with the prover? 

Intuitively, verifying a proper coloring requires checking inequality on each edge, and doing so in zero-knowledge should require at least \emph{one bit of help per edge}, hence the suggested bound.
This is done in~\cite{BickKO22} using a beautiful protocol, inspired by a classical ZK protocol for graph non-isomorphism, which requires two more interaction rounds after the color assignment.
We present a different protocol, where along with colors, each node gets extra $O(\log n)$ bits that allow it to verify the color-inequality with its neighbors.
Doing so, we answer the open question on the negative, as our $\dzma$ protocol only uses $O(n\log n)$ bits of the proof in total.
Moreover, our protocol is safe against a malicious adversary (which might deviate from the protocol) and not only against an honest-but-curious one as the previous protocol.
This is done at the mild expense of $\O(\log n)$ bits exchanged  between neighbors, compared to $O(1)$ in \cite{BickKO22}.
We prove the following.

\begin{restatable*}{thm}{thmcoloring}
	\label{thm:3col}
	$3$-col $\in \dzma\left(\log n,\; \log n\right)$.
\end{restatable*}

The proof of this theorem introduces a new technique, which we call \emph{polynomial sharing}, due to its resemblance to secret sharing.

The proof starts similarly to classical zero-knowledge proofs of $3$-colorability, where Merlin assigns each node $u$ a color $\col(u)$ derived from a random permutation applied to an initial $3$-coloring of the graph.
Each node $u$ can then locally create a low-degree polynomial $C_u$, which is an indicator polynomial of its color, i.e., $C_u(i)=1$ iff $\col(u)=i$.

From these polynomials, we can define a polynomial $P_u$ for each node $u$, as
$P_u=\sum_{v\in N(u)} C_u C_v$, with the desired property that $P_u(i)=0$ for $i=0,1,2$ iff the color of $u$ is different from the colors of all its neighbors. 
Hence, by examining $P_u$ at the points $0,1,2$, node $u$ can verify that its color is unique among itself and its neighbors. 
However, $P_u$ contains significant information about the colors of the neighbors, so it cannot be revealed to node $u$ without violating the zero-knowledge property.
This calls for using the polynomial sharing technique, as described next.

Instead of revealing $P_u$ to $u$,  $P_u$ is shared among $u$ and all its neighbors in a way that leaks no information to either of them, yet still allows them to jointly verify the required properties of $P_u$.
To this end, each neighbor $v$ of $u$ gets a \emph{helper polynomial} $H_v$, and $u$ gets a polynomial $P^{(0)}_v$ such that 
$P_u=P^{(0)}_u+\sum_{v\in N(u)}H_v$.
In this way, $u$ can evaluate $P_u$ in point $i$ by collecting the evaluations $H_v(i)$ from all its neighbors and summing it with its evaluation of $P^{(0)}_u(i)$.

Computing the evaluations for $i=0,1,2$ allows $u$ to verify the uniqueness of its color among its neighbors.
However, $u$ also has to verify that $P_u$ is consistent with the coloring given by Merlin.
That is, $P_u$ was defined as
$P_u=\sum_{v\in N(u)} C_u C_v$,
but the evaluations are of
$P_u=P^{(0)}_u+\sum_{v\in N(u)}H_v$,
and $u$ has to verify that 
these two polynomials indeed coincide.To this end, $u$ computes $P_u$ in a randomly chosen point $i^*$, with the help of its neighbors, and verify these are consistent.
Concretely, $u$ collects the evaluations of $C_v(i^*)$ from all neighbors $v$ to compute $\sum_{v\in N(u)} C_u(i^*)C_v(i^*)$, 
collects the evaluations of $H_v(i^*)$ to compute
$P^{(0)}_u(i^*)+\sum_{v\in N(u)}H_v(i^*)$, and verify there two are equal.

The technical details, and especially guaranteeing and proving that this is implemented without leaking information, are somewhat more involved, and described in \cref{sec:threecol-polynom}.
Altogether, this results in a more communication-efficient protocol for $3$-colorability with improved soundness.
In fact, the protocol can easily be extended to guarantee soundness $O(1/n^\delta)$ for any positive constant $\delta$ and for $c$-colorability for any $c\in[2,n]$, with no asymptotic cost (see \cref{claim:kcol and s soundnes} for details).
Moreover, the protocol uses public randomness shared among all nodes. This can be replaced by private randomness at the cost of one additional communication round between nodes, but without requiring extra prover communication. 
Finally, we show that this protocol does not require ID assignment for the nodes or initial knowledge of their neighbors. 
Each node only needs to know its number of neighbors but does not need to know its own ID or those of its neighbors.

\subsubsection{Triangle-freeness and Certificate-Communication Trade-off}

In \cref{sec:triangle-freeness}, we turn to the problem of triangle freeness, where the goal is to certify that the communication graph contains no triangles. The core of our approach is a polynomial $P_u$ for each node $u$, which is similar to the polynomial $P_u$ from the previous section. 
The polynomial $P_u$ describes the neighbors of $u$'s neighbors, i.e., its distance-2 neighborhood. The node $u$ must verify both that $P_u$ describes no triangles and that it correctly represents the neighborhood of $u$. This is achieved using polynomial sharing, but in a more sophisticated manner. With this idea, we are able to show the following.

\begin{restatable*}{thm}{thmtrifree}
	\label{thm:trifree}
	For every $3\leq\alpha \leq \sqrt n$ we have $\tfree \in \dzma\left(\frac{n}{\alpha}\log n,\; \alpha\log n\right)$.
\end{restatable*}

A straightforward sharing of $P_u$ as in the colorability protocol would require linear-size messages from the prover.
However, we establish a trade-off inspired by prior work on (non zero-knowledge) triangle-freeness certification~\cite{CrescenziFP19}. 
For a parameter $\alpha$ between $3$ and $\sqrt{n}$, we reduce the communication from Merlin to Arthur from $n \log n$ to $(n/\alpha) \log n$, at the expenses of increasing the number of bits exchanged between neighboring nodes from $\log n$ to $\alpha \log n$.

Once again, this result can be extended to improve soundness and to accommodate private randomness. 
More interestingly, the protocol also works without nodes and their neighbors having assigned IDs. 
If the maximum degree $\Delta$ satisfies $\Delta = o(n^{1/3})$, we further reduce the communication to only $O(\Delta^3/\alpha \log \Delta)$ for node certificates and $O(\alpha \log \Delta)$ for messages exchanged between nodes. This improves upon the previous work~\cite{CrescenziFP19} while incorporating the zero-knowledge property.

\subsubsection{A General Compiler}
\label{sec:intro-compiler}
Finally, in \Cref{sec:compiler}, we show a {\em compiler} that gives a $\dzma$ protocol for every graph property in ${\sf NP}$. Our protocol works in the {\em random oracle model} (ROM), an idealized cryptographic model where all the parties can query an oracle $H$ with an input $x$ and receive as an answer a fixed random value $H(x)$. ROM is a common tool in cryptography, being a convenient tool for proving the security of protocols, and it can usually be replaced by concrete assumptions.~\footnote{See \Cref{sec:ROM} for a more detailed discussion on this model.}

The ROM will be useful for two reasons. First, it is well-known that every language in ${\sf NP}$ admits  (centralized) NIZKs in ROM. Secondly, we can easily implement bit-commitments in the ROM. A bit-commitment is a two-phase protocol that allows the sender to commit to a message $m$ to the receiver without disclosing it. Later, the sender can open the commitment to the receiver, who learns the message $m$. The cryptographic properties that we want from the commitment schemes is that it is hiding, i.e., the receiver does not learn the message before the opening phase, and binding, i.e., the sender cannot open the commitment to a message $m' \ne m$. 

Our universal $\dzma$ protocol uses these two basic building blocks as follows:
\begin{enumerate}
    \item The prover sends the commitments to the adjacency matrix of the graph to all nodes
    \item For each node $v$, the prover opens the commitments for their edges
    \item The prover sends a centralized NIZK that the committed graph has the property of interest
    \item Each node then checks that
    \begin{enumerate}
        \item the prover sent the same commitment to all the nodes via an equality sub-protocol
        \item the opened edges are consistent with the node's neighborhood
        \item the NIZK verification passes.
    \end{enumerate}
\end{enumerate}
We stress that lines (1)--(3) above are sent in a single message from the prover to the nodes, and require no interaction.

Completeness is straightforward, and the soundness property comes from the centralized NIZK and the equality sub-protocol, along with the binding property of the commitment scheme. Using the zero-knowledge of the centralized NIZK and the hiding property of the commitment scheme, we can construct a simulator, and prove the zero-knowledge property of our distributed protocol through a sequence of hybrids. 

While the amount of communication with the neighbors can be easily bounded by $O(\log n)$, measuring the communication with the prover is more subtle. We know that every problem in ${\sf NP}$ admits a NIZK, but we can only bound its size by $\poly(n)$, which can vary from problem to problem. In particular, this value depends on the blow-up of the instance/proof size when we consider the reduction from the graph property of interest to an NP-complete problem.

\begin{restatable*}{thm}{thmuniversal}\label{thm:universal-zk}
	Any graph property in NP is also in $\dzma(\poly n, \log n)$.
\end{restatable*}

An interesting feature of our universal $\dzma$ protocol is that the zero-knowledge property holds against a coalition of an arbitrary set of nodes. Moreover, we can achieve negligible soundness by increasing the communication between the neighbors to polylogarithmic.

We notice that \cite{BickKO22} proposes a general compiler from PLS to distributed zero-knowledge interactive proofs. Their compiler leads to a protocol where the communication with the prover would be much more efficient compared to ours for graph properties that have a short PLS but heavy NIZK. 
On the other hand,  our protocol is more efficient in the number of rounds and bits of communication between neighbors, and achieves stronger zero-knowledge security. We give a more detailed comparison in \Cref{sec:discussion-compiler}.  %

\subsection{Related Work}
Our work continues a long line of research on distributed certification, where a prover assigns certificates for some graph property to the nodes, who then verify the property in a deterministic manner (proof-labeling schemes (PLFs)~\cite{KKP10}, locally checkable proofs~\cite{GS16}, and the class $\mathsf{NLD}$~\cite{FKP13}).
More relevant to our work are RPLS~\cite{FraigniaudPP19}, where the nodes   exchange randomized messages instead of deterministic ones.
Our work is also related to distributed interactive proofs~\cite{KOS18}, where the units can exchange messages with the prover instead of merely receiving certificates. 
Specifically, we elaborate upon a previous dMA protocol for triangle-freeness~\cite{CrescenziFP19}.

As previously discussed, the most relevant related work to our paper is \cite{BickKO22}, who first defined the notion of distributed zero-knowledge. 
Their most general definition is of \emph{distributed knowledge}, denoted dK[$r,\ell,\mathcal{A}_v,\mathcal{A}_s,k$],
where the zero-knowledge property is w.r.t. a distributed protocol from some family.
Here, $r$ is the number of communication rounds between the prover and the nodes, and $\ell$ is the number of bits exchanged in each such  round. 
$\mathcal{A}_v$ is the type of distributed algorithm that the nodes run after communicating with the prover, in order to decide whether to accept or reject, while
$\mathcal{A}_s$ is the type of distributed algorithm used by the simulator. 
Finally, $k$ is the size of an adversarial coalition of nodes for which the protocol is still zero-knowledge.
Let $\cgst(\mu)$ be the the standard $\cgst$ model with $O(\mu)$-bit massages,
and $\bot$ be the class of zero-rounds, no-communication protocols.
Then, $\dzma(\ell,\mu)=$dK[$1,\ell,\cgst(\mu),\bot,1$], which is also denoted dSZK[$1,\ell,\cgst(\mu),1$] there.

Finally, we notice that~\cite{BickKO22} use the classiacl Shamir secret sharing~\cite{Shamir79} in their zero-knowledge protocol for spanning tree verification. 
Our polynomial sharing technique scheme extends Shamir secret sharing, exploiting more structured polynomials to enable efficient verification without revealing further information.

\subsection{Open Questions}

\myparagraph{Polynomial sharing} Our protocols for coloring and triangle-freeness rely on the technique of polynomial sharing. We leave as an open problem if such a technique can be used to devise zero-knowledge protocols for a more general family of problems.

\myparagraph{Zero-knowledge against coalition} We leave as an open problem showing protocols for coloring and triangle-freeness against a coalition of multiple malicious nodes. We notice that in order to achieve such a result for coloring, we need a zero-knowledge protocol where the parties actually do not learn the color assigned by the prover explicitly. It is very intriguing if one can certify proper coloring without given the nodes their colors (or some function that reveals information on the colors). Note that this an independent graph-theoretical question, not necessarily connected to ZK.

\myparagraph{Complete trade-off on communication for triangle-freeness} As described in \Cref{sec:our-results}, we achieve a $\dzma$ protocol with a trade-off $\frac{n}{\alpha} + \alpha$  vs. $\alpha$ between the communication with the prover and with the neighbors. Since in the (non-zero-knowledge) PLS setting, we have the trade-off $\frac{n}{\alpha}$  vs. $\alpha$, we leave as an open question showing a zero-knowledge protocol with better communication for $\alpha > \sqrt{n}$, or proving its impossibility.

\myparagraph{Necessity of computational assumptions for universal protocol in the plain model} In our compiler, we make use of the Random Oracle to achieve our universal zero-knowledge protocol. We leave as an open question if distributed non-interactive zero-knowledge can be implemented in the plain model unconditionally, or if it impossible as in the centralized setting.

\section{Preliminaries}
\label{sec:prelim}

\subsection{Notations}

We consider a network represented by its communication graph $G=(V,E)$, which is simple, connected and undirected.
It consists of $|V|=n$ computational units (nodes) that communicate synchronously using $|E|=m$ communication links (edges).
We use $(u,v)$ to denote an (undirected) edge, 
$N(u)$ to denote the set of neighbors of $u\in V$, and 
$\deg(u)=|N(u)|$ for its degree.
We denote by $\Delta$ the maximum degree of a node in the graph (or more precisely, an upper bound on it).

A configuration $\con$, consists of a graph $G = (V, E)$, a state
space $S$, and a state assignment function $s: V\rightarrow S$. The state of
a node $v$, denoted $s(v)$, includes all local input to v. In particular,
the state always includes the node unique identifier $\ID(v)$, where $\ID : V \rightarrow \{1,\ldots, n^c\}$ is a one-to-one function and $c \ge 1$ is a constant. The state may include weights of incident edges (for edge-weighted networks) and other data like, e.g., the
result of an algorithm.
A languages $\mL$ is a set of configurations.
Each node initially knows its state and the IDs of its neighbors (but not their entire states).
In addition, all nodes have access to a source of shared randomness $r$, which is unknown to the prover.

For a positive integer $k$, we denote $[k]=\{0,\ldots,k-1\}$.
We say that a function $f$ is negligible, i.e. $f(x) = \negl(x)$, if for every fixed constant $c \in \mathbb{N}$, $f = o(1/n^c)$.

\subsection{Distributed Non-Interactive Zero-Knowledge Proofs}
\label{sec:prelim-definition}
We start by defining Distributed non-interactive (non zero-knowledge) proofs.

\begin{definition}[Distributed non-interactive proofs]\label{def:ni-proofs}
    The class of Distributed non-interactive proofs $\dma[\ell, \mu]$ contains the languages $\mL$ for which there exists
	a protocol of the following format:
    \begin{enumerate}
        \item The prover sends a message $\sigma_v$ to every node $v \in V$ of size at most $\ell$
        \item Based on its state and the randomness $r$ shared by the nodes (and unknown to the prover), each node $v$ sends a message  $\gamma_{v,u}$ of size at most $\mu$ to each neighbor $u \in N(v)$. We define $\gamma_u$ to be the sequence of messages received by node $u$ from its neighbors in a canonical order;
        \item Node $v \in V$ decides to accept or reject based on $(s(v),r, \sigma_v, \gamma_v)$.
    \end{enumerate} 

    Moreover, we require the following properties
	\begin{itemize}
		\item \textbf{Completeness:} If $\con \in \mL$,  there are messages $(\sigma_v)_{v \in V}$ that make the nodes accept with probability $1$.
		\item \textbf{Soundness:} If $\con \not\in \mL$,  for every $(\sigma_v)_{v \in V}$, at least one node reject with probability at least $1 - O(1/n)$.
    \end{itemize}
\end{definition}

Our definition of distributed non-interactive proofs can be seen as a \emph{one-sided randomized proof-labeling scheme (RPLS)}~\cite{FraigniaudPP19} with shared randomness between the nodes.

Given a Distributed non-interactive proof, we define the view of a node.

\begin{definition}[View of $v \in V$]\label{def:view}
In the execution of a distributed non-interactive proof, the {\em view} of a node $v \in V$, defined by a random variable $VIEW_v$ that contains the values $(s(v),r,\sigma_v,\gamma_v)$ described in \Cref{def:ni-proofs}. We denote $\mathcal{VIEW}_v$ as the distribution of $VIEW_v$. 
\end{definition}

Now we turn to define the notion of Distributed non-interactive zero-knowledge proofs.

\begin{definition}[Distributed non-interactive zero-knowledge proofs] \label{def:dnizk}
    The class of distributed non-interactive zero-knowledge proofs $\dzma[\ell, \mu]$ contains the languages $\mL$ for which there exists a distributed non-interactive proof for $\mL$ with the extra property:
	\begin{itemize}
        \item \textbf{Zero-knowledge:} There exists a randomized algorithm ${\sf Sim}$ that every node can run locally (without any communication), such that if $\con \in \mL$, the distribution of ${\sf Sim}$'s output is equal to $\mathcal{VIEW}_v$.
	\end{itemize}
\end{definition}

\begin{remark}
In \cite{BickKO22}, their definition of distributed strong ZK has two extra parameters: the number of rounds with the prover and the maximum number of malicious nodes. We notice that while the former is not necessary in our definition (since there is always a single message from the prover), we can also extend the notion of zero-knowledge for a coalition of malicious parties. We discuss this extension in \Cref{sec:compiler}.     
\end{remark}
\begin{remark}
We can also discuss relaxed versions of zero-knowledge, where we require that the distribution of the views of the nodes to be (statistically or computationally) indistinguishable from the output of the simulator. We also discuss more about this extension in \Cref{sec:compiler}.
\end{remark}

\section{A $\dzma$ Protocol for $3$-colorability}
\label{sec:threecol-polynom}

In this section, we present a $\dzma$ proof for $3$-coloring, i.e., the problem of coloring a graph's nodes with $3$ colors, such that no two neighbors have the same color.
The rest of this section is dedicated to proving the following theorem and its extensions.

\thmcoloring

\subsection{The $3$-colorability Protocol}

    Let $q$ be a prime number in $\{n+1,...,2n\}$, and $\fq$ be the finite field with $q$ elements;
    $n$ and $q$ are known to all the parties.
    
    Merlin chooses a $3$-coloring, and applies a random permutation on colors, resulting in the  assignment of a color $\col(u)$ to every node $u$.
    
    For each node $u$, Merlin chooses a uniformly random field element $r_u\in\fq$ and sends it to the $u$.
    Let $C_u:\fq\to\fq$ 
    be the \emph{coloring polynomial} of $u$, defined as the unique polynomial of degree $\deg(C_u)\leq3$ satisfying:
    \begin{align}\label{eq:polynomial}
    C_u(i)=
    \begin{cases}
    	0, & \text{if $0\leq i\leq2$ and $i\neq \col(u)$}\\
    	1, & \text{if $i= \col(u)$}\\
    	r_u, & \text{if $i= 3$ .}
    \end{cases}
    \end{align}
    The values of $C_u$ for $3<i<q$ are uniquely determined due to the degree bound.

    Observe that if two neighboring nodes $u,v$ have different colors, then for each $0\leq i\leq 2$
    either $C_u(i)=0$ or $C_v(i)=0$, so $C_u(i)C_v(i)=0$ for all $0\leq i\leq 2$.
    For each node $u$ we define its polynomial $P_u$ by 
    \[P_u=\sum_{v\in N(u)} C_u C_v,
    \]
    and note that $\deg(P_u)\leq 6$.
    Moreover, for $0\leq i\leq 2$, the value $P_u(i)$ is a sum of at most $\Delta<q$ values which are all in~$\{0,1\}$, and hence for $0\leq i\leq 2$ we have $P_u(i)=0$ if and only if $C_u(i) C_v(i)=0$ for all $v\in N(u)$.
    Hence, to verify the coloring around $u$, it is enough to verify that $P_u(i)=0$ for $i\in\{0,1,2\}$.
    
    The polynomial $P_u$ depends on the colors of the neighbors of $u$ and their random values $r_v$, and hence it is not known to node $u$.
    If we were not in the zero-knowledge setting, the prover could have given $P_u$ to $u$ and let it verify it nullifies on $0,1,2$.
    However, $u$ should not learn $P_u$, as it might disclose information on the neighbors' colors; one might suspect that the random values $r_v$ are enough to conceal any useful information encoded in $P_u$, so let us falsify this intuition.
    Define 
    $P_{N(u)}=\sum_{v\in N(u)} C_v$
    and note that 
    $P_u=C_u P_{N(u)}$.
    If $u$ would have known both $C_u$ and $P_u$, it could have easily computed $P_{N(u)}$ by polynomial division.
    However, for $i\in\{0,1,2\}$, $P_{N(u)}(i)$ indicates how many of $u$'s neighbors are colored $i$, which leaks information on the graph.
    For a concrete example, assume $\deg(u)=2$ and moreover $P_{N(u)}(i)=2$ for some $i\in\{0,1,2\}$. Then, $u$ learns its neighbors are both colored $i$, so they are not neighbors of one another.
    
    Hence, we need to verify the values of $P_u$ on $i\in\{0,1,2\}$ without having any node learning $P_u$. 
    To this end, we ``split''  $P_u$ among $u$ and all its neighbors, à la secret-sharing.
    
    Each node $u$ receive one \emph{helper polynomial} $H_u$ over $\fq$, which has degree at most $6$ and whose coefficients are chosen u.a.r.
    For a node $u$, the helper polynomials $H_v$ of all its neighbors will serve as shares of $P_u$, 
    and $u$ itself will receive one extra share $P^{(0)}_u$ which is also a polynomial over $\fq$ of degree at most $6$, 
    such that
    \[
    P_u=P^{(0)}_u+\sum_{v\in N(u)}H_v.
    \]
    We then use the linearity of polynomials (in their coefficients) to allow $u$ evaluating $P_u$ at a point $i$ without learning $P_u$:
    to compute $P_u(i)$, 
    node $u$ computes $P^{(0)}_u(i)$,
    each neighbor $v$ of $u$ computes 
    $H_v(i)$ and sends it to $u$,
    and then $u$ can retrieve $P_i(i)$ by
    $P_u(i)=P^{(0)}_u(i)+\sum_{v\in N(u)}H_v(i).$

    \cref{prot:three-col-poly-merlin} below shows the certificate assignment by Merlin.
    It first permutes the colors to get a proper $3$-coloring~$\col$, then chooses a random value $r_u$ for each node $u$,
    from which it computes $C_u$
    and $P_u$ for every $u$,
    as described above.
    Then, it creates the random helper polynomials $H_u$ for each node, and uses them to compute the local share $P^{(0)}_u$ of each $u$.
    Finally, it sends $\col(u), r_u$ and the polynomials $P^{(0)}_u$ and $H_u$ to each node $u$.
    
    To verify the given $3$-coloring (and thus decide $3$-colorability), each node $u$ follows \cref{prot:three-col-poly}.
    Its first task is to check that the polynomial defined as the sum of polynomials $P^{(0)}_u$ and $\{H_v\mid v\in N(u)\}$ describes a proper $3$-coloring, in the sense that it nullifies on $i\in\{0,1,2\}$.
    This is done using the evaluations of $P^{(0)}_u(i)$ on $i\in\{0,1,2\}$ made by $u$ itself, and by the evaluations of $H_v(i)$ on the same values of $i$ sent to it by each neighbor $v$.
    
    The second task of $u$ is to make sure that the above polynomial is identical to the coloring polynomial, i.e., the polynomial induced by the color $\col(u)$, the random value $r_u$, and the colors and random values of the neighbors.
    To this end, all the nodes pick a random point $i^\ast$ together, using the shared randomness, 
    and make sure both polynomials (the sum of shares and the coloring polynomial) agree on $i^\ast$.
    This requires them to exchange their evaluations of $H_u$ on this point (the shares), as well as their evaluations of the coloring polynomials $C_u$ on it.

We start with Merlin's protocol.

    \begin{protocol}[$3$-colorability low-communication protocol]\label{prot:three-col-poly-merlin}
    Protocol for \textbf{Merlin}
    \begin{enumerate}[label=(M\arabic*)]
    \item 
    Apply a random permutation on the coloring to get a proper $3$-coloring $\col:V\to\{0,1,2\}$
    \item
    For each $u\in V$,
    pick $r_u\in\fq$ u.a.r. 
    
    \item 
    For each $u\in V$, use $\col(u)$ and $r_u$ to compute $C_u$ by \cref{eq:polynomial}
    \item 
    For each $u\in V$,
    let $P_u\gets \sum_{v\in N(u)}C_u C_v$
    \item 
    For each $u\in V$,
    pick~$7$ coefficients in $\fq$ independently and u.a.r. to create a polynomial $H_u$ with $\deg(H_u)\leq 6$
    \item 
    For each $u\in V$,
    let $P^{(0)}_u\gets P_u-\sum_{v\in N(u)}H_v$

    \item\label{P3colmLsend}
    For each $u\in V$,
    send $\col(u), r_u$ 
    the coefficients of $P^{(0)}_u$ and
    the coefficients of $H_u$ 
    to $u$
    \end{enumerate}
    \end{protocol}

We move on to Arthur, i.e., node~$u$'s protocol.
\begin{protocol}
    [$3$-colorability low-communication protocol]\label{prot:three-col-poly}
    Protocol for \textbf{node $u$}
    \begin{enumerate}
    \item\label{P3colaLreceive}
    Receive $\col(u)$, $r_u$,    the coefficients of $P^{(0)}_u$ and
    the coefficients of $H_u$ 
    from Merlin

    \item 
    Pick $i^\ast\in [q]\setminus[3]$ u.a.r. using the shared randomness
    
    \item\label{P3colaLsendHis} Send the evaluations 
    $H_u(i)$, for all $i\in\{0,1,2,i^\ast\}$,
    to every neighbor $v\in N(u)$
    
    \item
    From every $v\in N(u)$ 
    receive the evaluations 
    $H_v(i)$ for all $i\in\{0,1,2,i^\ast\}$
    
    \item 
    Verify that 
    $P^{(0)}_u(i)+
    \sum_{v\in N(u)}H_v(i)=0$ 
    for all $i\in\{0,1,2\}$, otherwise \textbf{reject}
    
    \item 
    Using $\col(u)$ and $r_u$,
    construct the polynomial
    $C_u:\fq\to\fq$ by  \cref{eq:polynomial}
    
    \item\label{P3colaLsendCi}
    Send $C_u(i^\ast)$
    to all your neighbors
    \item
    Receive $C_v(i^\ast)$
    from each neighbor~$v$
    
    \item 
    Verify that  
    $\sum_{v\in N(u)}C_u(i^\ast) C_v(i^\ast)
    =
    P^{(0)}_u(i^\ast)+
    \sum_{v\in N(u)}H_v(i^\ast)$, 
    otherwise \textbf{reject}
    \item \textbf{Accept}
    \end{enumerate}
    \end{protocol}

Completeness and soundness proofs are inspired by~\cite{CrescenziFP19}.
Intuitively, the protocol is zero-knowledge since all the marginal distribution of the information that each node receives is uniformly random. We can formalize the proof of this property by showing a simulator that indeed only feeds random values to a malicious node, conditioned on the fact that these random values pass the consistency checks. Using a {\em hybrid argument}, a standard proof technique in cryptography, we can show that the distribution of the real protocol and the simulated one are exactly the same.

\subsection{Analysis of the $3$-colorability Protocol}
We now prove these properties in more detail.
    
    \subsubsection*{Completeness}
    The completeness of the scheme is immediate --- if the graph has a $3$-coloring and everyone follow the protocol, then all nodes accept.

    \subsubsection*{Soundness}
    For soundness, note that if the graph has no $3$-coloring, than for every color assignment $\col$ there are two neighbors $u$ and $v$ such that $\col(u) = \col(v)$.

    For each node $u\in V$, let $\tilde{P}^{(0)}_{u}$ and $\tilde{H}_{u}$ be the polynomials provided by the prover to $u$. 
    Let
    $\tilde{P}_{u} = \tilde{P}^{(0)}_{u} +
    \sum_{v\in N(u)} \tilde{H}_{v}$.
    If $\tilde{P}_u(i) \ne 0$ for some $i \in \{0,1,2\}$, then node $u$ rejects with probability $1$.
    Assume this is not the case, i.e., $\tilde{P}_u(i) = 0$ for all $i \in \{0,1,2\}$ and all $u\in V$. 
    Then, node $u$ accepts iff
    \begin{align}\label{eq:equality-polynomials}
    \tilde{P}_{u}(i^\ast) = \sum_{v\in N(u)} C_u(i^\ast)C_v(i^\ast).
    \end{align}
    Here, $C_u$ is the polynomial induced by the values $\col(u)$ and $r_u$ sent to $u$ by Merlin, who might have chosen them not by to the protocol.

    Let $u$ be a node colored  $\col(u)=i'\in\{0,1,2\}$ that is not properly colored, i.e., with at least one neighbor $v\in N(u)$ satisfying  $\col(v)=i'$.
    The two polynomials $\tilde{P}_{u}$ and $\sum_{v \in N(u)} C_uC_v$ are different since $\tilde{P}(i) = 0$ for all $i \in \{0,1,2\}$,
    while $0 < \sum_{v \in N(u)} C_u(i')C_v(i') < n$ (this is true as a sum over $\mathbb{Z}$ and over $\fq$). 
    Since both polynomials have degree at most $6$ the polynomial 
    $\tilde{P}_{u} - \sum_{v\in N(u)} C_uC_v$
    has at most $6$ distinct roots. Therefore, there exists at most $6$ values of $i^\ast$ such that \Cref{eq:equality-polynomials} holds. Since $i^\ast$ is picked uniformly at random in $[q]\setminus[3]$, the probability that \Cref{eq:equality-polynomials} holds at node $u$ and accepts is at most $\frac{6}{q-3} < \frac{6}{n-3}$.

    \subsubsection*{Communication}
    The communication between Merlin the Arthur (the node) only happens in Line~\ref{P3colmLsend} of Protocol~\ref{prot:three-col-poly-merlin}, which corresponds to Line~(\ref{P3colaLreceive}) of Protocol~\ref{prot:three-col-poly}.
    For each node, Merlin sends two field elements and two sets of six coefficients in the field, for a total of $O(1)$ field elements. This requires $O(\log q)=O(\log n)$ bits of communication.
    
    Messages are sent between graph nodes in Line~(\ref{P3colaLsendHis}) and Line~(\ref{P3colaLsendCi})  of Protocol~\ref{prot:three-col-poly}, and are received in the following lines.
    Here, neighbors exchange four field elements,
    for a total of $O(\log q)=O(\log n)$ bits per edge.
    
\subsection{The Zero-Knowledge Property of the $3$-colorability Protocol}

The general idea of the simulator is to replace all the communication received by a malicious node by random polynomials/values conditioned on the fact that such values pass the test performed by the node.
All random choices below are uniform from $\fq$ and independent, except for Line~\ref{P3colsLpicki123} and Line~\ref{P3colsLpickiast}, which are explained below.

    \begin{protocol}
    \label{prot:sim-three-col-poly}
    Simulator for malicious node $u$
    
    \begin{enumerate}
    \item Pick ``shared'' randomness $i^\ast$
    
    \item Pick random $c_u$,  $r_u$, $7$ coefficients that define a degree-$6$ polynomial $H_u$, and
    $7$ coefficients that define a degree-$6$ polynomial $\tilde P^{(0)}_u$

    \item \label{P3colsLpicki123}
    For each $i \in {0,1,2}$, pick random values $\tilde{h}_v^i$ for all $v\in N(u)$ such that $\tilde P_u^{(0)}(i) + \sum_{v\in N(u)} \tilde{h}_v^i = 0$

    \item 
    \label{P3colsLpickiast}
    Pick random values $\tilde{c}_v, \tilde{h}_v^{i^\ast}$ for all $v\in N(u)$ such that $\sum_{v \in N(u)} C_u(i^\ast)\tilde{c}_v= \tilde P^{(0)}_u(i^\ast) + \sum_{v\in N(u)} \tilde{h}_v^{i^\ast}$

    \item Simulate the message with $c_u, r_u$ and the coefficients of $H_u$ and $P^{(0)}_u$ from the prover to $u$
    
    \item Simulate the message $(\tilde{h}_v^i)_{i \in \{0,1,2,i^*\}}, \tilde{c}_v$ from each node $v \in N(u)$ to $u$

    \end{enumerate}
    \end{protocol}
In Line~\ref{P3colsLpicki123}, for each $i\in\{0,1,2\}$ the values $\tilde{h}_v^i$ are not independent, as they must satisfy the specified equation.
To choose them, we arbitrarily pick one neighbor $v'\in N(u)$,
choose $\tilde{h}_v^i\in\fq$ independently and u.a.r. for all $v\in N(u)\setminus \{v'\}$, and then set $\tilde{h}_{v'}^i=-( \tilde P_u^{(0)}(i) + \sum_{v\in N(u)\setminus\{v'\}} \tilde{h}_v^i )$.
Note that $\tilde{h}_{v'}^i$ is also distributed uniformly in $\fq$: it is enough that one of the above summands is chosen u.a.r., and this is always the case since $\tilde P_u^{(0)}(i)$ always exists.
Line~\ref{P3colsLpickiast} is similar: all the values except for $\tilde{h}_{v'}^{i^\ast}$ for some $v'\in N(u)$ are chosen independently (including $\tilde{c}_{v'}$) and 
$\tilde{h}_{v'}^{i^\ast}$ is then chosen to complement them.

Our goal now is to prove that from the perspective of the malicious node, the distribution of transcripts that it receives is exactly the same in the real run of the protocol and in the simulated version.
To this end, we use the standard technique in cryptography of providing {\em hybrids}, intermediate protocols between the real run and the simulated one. 
We consider a series of steps, where the first step is the real run of the protocol, the last step is the simulated version, and we can show that each consecutive pair of steps is indistinguishable. 
By transitivity, we have that the real run and the simulated one are indistinguishable.

  \begin{description}
    \item[Hybrid 0:] The transcript of the real run of \Cref{prot:three-col-poly}
    \item[Hybrid 1:] The same as hybrid $0$, but instead of the neighbors sending $C_v(i^\ast)$, node $u$ receives random values $\tilde c_v$ such that $P^{(0)}_u(i^\ast) + \sum_{v\in N(u)} H_v(i^\ast) = \sum C_u(i^\ast) \tilde c_v$
    \item[Hybrid 2:] The same as hybrid $1$, but instead of the neighbors sending $H_v(i)$, node $u$ receives random values $\tilde h^{i^\ast}_v$ such that $P^{(0)}_u(i^\ast) + \sum_v \tilde  h^{i^\ast}_v= \sum C_u(i^\ast) \tilde c_v$ and for $i \in \{0,1,2\}$, $P^0_u(i) + \sum_{v\in N(u)} \tilde{h}^i_v = 0$
    \item[Hybrid 3:] Same as hybrid $2$, but instead of receiving the honest $P^{(0)}_u$, node $u$ receives a random polynomial $\tilde{P}^{(0)}_u$
  \end{description}

\begin{lemma}
  The distribution of transcripts between Hybrid $0$ and Hybrid $1$ are identical.
\end{lemma}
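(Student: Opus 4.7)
The plan is to reparameterize Hybrid~0 so that the values $(C_v(i^\ast))_{v\in N(u)}$ become explicit independent uniform inputs, and then show that the conditional distribution of these values given the rest of the transcript is exactly the distribution of $(\tilde c_v)_v$ prescribed by Hybrid~1.

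For the reparameterization, I first observe that if $L_0,L_1,L_2,L_3$ denote the Lagrange basis polynomials on $\{0,1,2,3\}$, then $C_v=L_{\col(v)}+r_v L_3$, and the map $r_v\mapsto C_v(i^\ast)=L_{\col(v)}(i^\ast)+r_v L_3(i^\ast)$ is an affine bijection on $\fq$ whenever $i^\ast\in[q]\setminus[3]$ (since $L_3(i^\ast)\neq 0$ there). Hence Hybrid~0 can be equivalently sampled by drawing $i^\ast,\col,r_u,H_u,(H_v)_v$ as usual together with $(c_v)_v\in\fq^{|N(u)|}$ drawn uniformly and independently, setting $r_v=(c_v-L_{\col(v)}(i^\ast))/L_3(i^\ast)$, and defining $P^{(0)}_u$ by Merlin's prescription.

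Next I will pin down the information that the non-$c_v$ part of the transcript carries about $(c_v)_v$. From $P_u=C_u\sum_v C_v=C_u\sum_v L_{\col(v)}+C_u L_3\sum_v r_v$ one sees that $P_u$, viewed as a polynomial in $\fq[x]$, depends on $(r_v)_v$ only through the single scalar $\sum_v r_v$; in particular $P_u(i^\ast)=C_u(i^\ast)\sum_v c_v$, while $P_u(i)=0$ for $i\in\{0,1,2\}$ in every proper coloring. Since $(\col(u),r_u,H_u,i^\ast,(H_v(i))_{v,i\in\{0,1,2,i^\ast\}})$ are independent of $(c_v)_v$ by construction, the only coupling between $(c_v)_v$ and the conditioning is through the affine identity $C_u(i^\ast)\sum_v c_v=P^{(0)}_u(i^\ast)+\sum_v H_v(i^\ast)$.

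A direct Bayesian computation then yields that the posterior of $(c_v)_v$ is uniform on the affine subspace $\{(\xi_v):C_u(i^\ast)\sum_v\xi_v=P^{(0)}_u(i^\ast)+\sum_v H_v(i^\ast)\}$, which is precisely the sampling rule for $(\tilde c_v)_v$ in Hybrid~1. The delicate point I expect to require most care is arguing that the latent variables (the colors $\col_{V\setminus\{u\}}$ and the three unrevealed coefficients of each $H_v$) do not introduce any hidden dependence on $(c_v)_v$: for proper colorings the consistency conditions at $i\in\{0,1,2\}$ are predicates on the transcript alone, and at each of the three remaining interpolation points the number of completions of $(H_v)_v$ matching any prescribed value of $\sum_v H_v$ equals $q^{|N(u)|-1}$, so the overall count of compatible latent tuples factorizes cleanly and does not bias the posterior. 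The degenerate case $C_u(i^\ast)=0$ merely removes the constraint in both hybrids, so their conditional distributions still agree.
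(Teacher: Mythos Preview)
Your proposal is correct and rests on the same key observation as the paper: for $i^\ast\notin\{0,1,2\}$ the map $r_v\mapsto C_v(i^\ast)$ is a bijection on $\fq$, so each $C_v(i^\ast)$ is marginally uniform. The paper essentially stops there, asserting that ``the distribution of the values $C_v(i^\ast)$ in the real protocol consists of a uniformly random value such that $P^{(0)}_u(i^\ast)+\sum_v H_v(i^\ast)=\sum C_u(i^\ast)C_v(i^\ast)$'' and that this matches Hybrid~1.

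Your argument goes further and is more rigorous on exactly the point the paper elides: marginal uniformity of each $C_v(i^\ast)$ does not by itself pin down the \emph{conditional} law of the tuple $(C_v(i^\ast))_v$ given the rest of $u$'s view, because that view includes the full polynomial $P^{(0)}_u$, which depends on $(r_v)_v$ through $P_u$. Your reparameterization, together with the observation that $P_u$ depends on $(r_v)_v$ only via $\sum_v r_v$ and the counting of latent completions of $(H_v)_v$ and $\pi$, supplies precisely the missing step: for every fixing of the rest of the transcript, the number of consistent latent configurations is the same for all $(c_v)_v$ on the affine constraint and zero off it, so the posterior is uniform on the constraint. You also handle the boundary case $C_u(i^\ast)=0$, which the paper does not mention. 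In short, the approaches share the same seed idea, but yours fills in the Bayesian bookkeeping that the paper leaves implicit.
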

\begin{proof}
  In order to prove this lemma, we use the fact that for for every $i > 2$, $C_v(i)$ is a uniformly random value. More precisely, we show that for every $y \in \mathbb{F}_q$, 
      \begin{align}\label{eq:uniform-evalution}
          \Pr_{r_u}[C_v(i) = y] = \frac{1}{q}.
      \end{align} 
    Given \Cref{eq:uniform-evalution}, the distribution of the values $C_v(i^\ast)$ in the real protocol consists of an uniformly random value such that $P^{(0)}_u(i^\ast) + \sum_{v\in N(u)} H_v(i^\ast) = \sum C_u(i^\ast) C_v(i^\ast)$, which is the exactly same distribution of Hybrid~$1$.
      
      We now show now that \Cref{eq:uniform-evalution} holds.
      Since the values of $C_v(0)$, $C_v(1)$ and $C_v(2)$ are fixed and $C_v(\cdot)$ has degree at most $3$, each choice of $C_v(3)=r_v \in \mathbb{F}_q$ 
      induces a unique polynomial, and hence a unique value of $C_v(i)$ for every $i \geq 2$.
      On the other hand, for each $i \geq 2$, each choice of $C_v(i)$ induces a unique polynomial and a unique value of $r_v$.
      Therefore, for every fixed $i\geq 2$ there is a bijection between $r_v$ and $C_v(i)$ and \Cref{eq:uniform-evalution} holds.
  \end{proof}

  \begin{lemma}
      The distribution of transcripts between Hybrid $1$ and Hybrid $2$ are identical.
  \end{lemma}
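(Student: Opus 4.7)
The plan is to compare Hybrid~1 and Hybrid~2 by conditioning on the components that are clearly identically distributed in both, namely $(\col(u), r_u, H_u, P^{(0)}_u, i^\ast, \tilde c_v)$. What remains is to show that the conditional distribution of $(H_v(i))_{v \in N(u),\, i \in \{0,1,2,i^\ast\}}$ in Hybrid~1 coincides with the Hybrid~2 rule that draws $\tilde h_v^i$ uniformly at random subject to the explicit sum constraints listed in the hybrid's definition.

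First I would recall that in Hybrid~1 each $H_v$ is an independent uniform polynomial of degree at most $6$ over $\fq$, and that $P^{(0)}_u$ is derived from $P_u$ and $\sum_v H_v$. I would reparametrize each $H_v$ by its evaluations at seven distinct field points that include $\{0,1,2,i^\ast\}$; via Lagrange interpolation this is a linear bijection, so uniform coefficients correspond to a uniform $7$-tuple of evaluations, independently across $v$. Under this reparametrization, the identity $\sum_v H_v = P_u - P^{(0)}_u$ decouples into seven independent scalar equations $\sum_v H_v(j_k) = P_u(j_k) - P^{(0)}_u(j_k)$, one at each evaluation point. The three ``hidden'' points only affect $P^{(0)}_u$ at those same three points and do not otherwise enter $u$'s view, so conditioning on what $u$ sees, the tuples $(H_v(i))_{v\in N(u)}$ are independent across $i \in \{0,1,2,i^\ast\}$, and each is uniform on $\fq^{|N(u)|}$ subject only to $\sum_v H_v(i) = P_u(i) - P^{(0)}_u(i)$.

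It remains to match these sum values against Hybrid~2. For $i \in \{0,1,2\}$, the assumption that the configuration is in the language gives $P_u(i) = 0$, so the constraint becomes $\sum_v H_v(i) = -P^{(0)}_u(i)$, exactly Hybrid~2's rule for $\tilde h_v^i$. For $i = i^\ast$, the Hybrid~1 sampling of $\tilde c_v$ enforces $\sum_v C_u(i^\ast) \tilde c_v = P^{(0)}_u(i^\ast) + \sum_v H_v(i^\ast) = P_u(i^\ast)$, which rearranges to $\sum_v H_v(i^\ast) = \sum_v C_u(i^\ast) \tilde c_v - P^{(0)}_u(i^\ast)$, exactly Hybrid~2's rule for $\tilde h_v^{i^\ast}$. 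Thus the two conditional distributions coincide and the full transcripts are equal in distribution.

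The main obstacle is the conditional-distribution bookkeeping, particularly showing that further conditioning on $\tilde c_v$ only fixes the single scalar $\sum_v H_v(i^\ast)$ and leaks no other information about the individual evaluations. This follows because $\tilde c_v$ itself is drawn from an affine subspace parametrized precisely by that sum together with otherwise-free coordinates, so conditioning on $\tilde c_v$ is equivalent to fixing that one scalar; once pinned down, the remaining freedom in the $H_v(i^\ast)$ is symmetric across neighbors, matching Hybrid~2 exactly.
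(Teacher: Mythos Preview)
Your argument is correct and follows the same approach as the paper: both identify that, because each $H_v$ is a uniform degree-$\leq 6$ polynomial, the four evaluations $(H_v(0),H_v(1),H_v(2),H_v(i^\ast))$ are jointly uniform, and the only constraints tying them to $u$'s view are the four sum equations coming from $P^{(0)}_u = P_u - \sum_v H_v$. Your write-up is considerably more careful than the paper's one-sentence proof---in particular, your reparametrization by seven evaluation points and your explicit handling of the conditioning on $P^{(0)}_u$ and on $\tilde c_v$ fill in details the paper leaves implicit (the paper's statement even mentions only the constraint at $i^\ast$, omitting the three at $i\in\{0,1,2\}$ that you correctly include).
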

  \begin{proof}
      Given that each degree-$6$ polynomials $H_v$ are picked uniformly at random, the distribution of $$\left((H_v(0),H_v(1),H_v(2),H_v(i^\ast))\right)_{v \in N(u)}$$ is uniformly random over $\mathbb{F}_q^{4|N(u)|}$ with the constraint that $P^{(0)}_u(i^\ast) + \sum_{v\in N(u)} H_v(i^\ast) = \sum_{v\in N(u)} C_u(i^\ast) \tilde c_v$. This is exactly the same distribution as Hybrid $2$.
  \end{proof}

  \begin{lemma}
      The distribution of transcripts between Hybrid $2$ and Hybrid $3$ are identical.
  \end{lemma}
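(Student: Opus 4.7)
The plan is to observe that in Hybrid~$2$ the honest $P^{(0)}_u$ is already uniformly distributed over degree-$6$ polynomials, independently of all other components of $u$'s transcript, so replacing it with a fresh uniform $\tilde{P}^{(0)}_u$ as in Hybrid~$3$ does not change the joint distribution. Concretely, I would start by recalling that $P^{(0)}_u = P_u - \sum_{v \in N(u)} H_v$, where each $H_v$ is specified by $7$ coefficients drawn independently and uniformly from $\fq$; equivalently, each $H_v$ is uniform over the space of polynomials of degree at most $6$, identified with $\fq^{7}$.

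The first step is to argue that $\sum_{v \in N(u)} H_v$ is uniform over $\fq^{7}$. This follows from the standard fact that in the abelian group $(\fq^{7},+)$, the sum of independent random variables is uniform as soon as one summand is uniform and independent of the rest. Consequently, conditioned on any fixing of $P_u$ (which is itself a function of $\col$, the values $r_v$ and $C_u$), the polynomial $P^{(0)}_u = P_u - \sum_{v \in N(u)} H_v$ is uniformly distributed over degree-$6$ polynomials and independent of $P_u$.

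Next, I would verify that the polynomials $H_v$ for $v \in N(u)$ do not appear anywhere else in $u$'s view in Hybrid~$2$. The prover only sends $u$ the polynomial $H_u$ (for $u$ itself) and $P^{(0)}_u$; in Hybrid~$2$ the neighbors send the simulator-generated values $\tilde{c}_v$ and $\tilde{h}^i_v$ for $i \in \{0,1,2,i^\ast\}$, which are sampled subject to constraints that only reference $P^{(0)}_u$ (and $C_u(i^\ast)$), not the individual $H_v$'s. Therefore, once we know $P^{(0)}_u$, the distributions of $\tilde{c}_v$ and $\tilde{h}^i_v$ in Hybrid~$2$ and Hybrid~$3$ are sampled by the exact same conditional rule.

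Putting this together, I would conclude by re-parameterizing: instead of sampling the $H_v$'s for $v\in N(u)$ and setting $P^{(0)}_u = P_u - \sum_v H_v$, one may directly sample $P^{(0)}_u$ uniformly over degree-$6$ polynomials, and then sample the $\tilde{c}_v, \tilde{h}^i_v$ from the same conditional distribution; this yields exactly Hybrid~$3$. The only subtle point to check carefully is that no component of $u$'s view in Hybrid~$2$ depends on the individual $H_v$'s beyond their appearance inside $P^{(0)}_u$ — which is precisely what distinguishes Hybrid~$2$ from Hybrid~$1$, where the raw evaluations $H_v(i)$ were still being transmitted. Once this is confirmed, the identity of the two distributions is immediate.
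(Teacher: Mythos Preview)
Your proposal is correct and follows essentially the same approach as the paper: both argue that $\sum_{v\in N(u)} H_v$ is a uniformly random polynomial of degree at most~$6$, hence $P^{(0)}_u = P_u - \sum_{v\in N(u)} H_v$ is itself uniform, making the replacement by a fresh uniform $\tilde P^{(0)}_u$ distribution-preserving. Your write-up is in fact more careful than the paper's terse three-sentence proof, as you explicitly verify that in Hybrid~$2$ the neighbors' $H_v$'s no longer appear anywhere in $u$'s view except through $P^{(0)}_u$, and that $P^{(0)}_u$ is independent of the remaining components $(\col(u),r_u,H_u,i^\ast)$.
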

  \begin{proof}
      We have that $\sum_{v \in N(u)} H_v$ is also a random polynomial of degree at most $6$. So it follows that $P^{(0)}_u = P_u - \sum_{v \in N(u)} H_v$ is also a random polynomial with degree at most $6$ under the conditioned on that the two checks pass.  In Hybrid $3$, we pick $P^{(0)}_u$ from the exact same distribution.
  \end{proof}
This concludes the proof of Theorem~\ref{thm:3col}.

\subsection{Discussion and Extensions of  the $3$-colorability Protocol}
\paragraph{More colors, less errors}
The above protocol proves $3$-colorability with soundness error of roughly $1/n$.
However, it is rather simple to extend it to accommodate $c$-coloring for every $c\in [2,n]$ and achieve soundness $s$ for every $s\in(0,6/(n-3)]$.
To achieve these goals, 
we only need to adjust the degrees of the polynomials and the size $q$ of the field $\fq$.

For $c$-coloring, we change \cref{eq:polynomial} to have $C_u(0),...,C_u(c-1)$ indicating $\col(u)$, and $C_u(c)$ random.
The polynomial $C_u$ has degree $c$, and $P_u^{(0)}$ and $H_u$ have degrees $2c$.
The soundness is then $\frac{2c}{q-c}$ (the proof is identical), and getting an arbitrary soundness error $s$ requires setting
$3c/s<q\leq 6c/ s$.
This choice of parameters guarantees $c\leq n=O(1/s)$, and thus $\log q=O(\log(1/s))$ communication.

\begin{claim}
	\label{claim:kcol and s soundnes}
For $2\leq c\leq n$ and a parameter $0<s\in O(1/n)$, 
 $c$-col $\in \dzma\left(\log (1/s),\; \log (1/s)\right)$ with soundness error $s$.
\end{claim}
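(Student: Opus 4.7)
The plan is to parametrize Protocols~\ref{prot:three-col-poly-merlin} and~\ref{prot:three-col-poly} by the number of colors~$c$ and the field size~$q$, then choose $q$ large enough to drive the soundness below~$s$. All of the structural arguments (completeness, soundness via degree counting, polynomial sharing, the hybrid-based zero-knowledge proof) carry through with only bookkeeping changes on degrees and ranges.

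Concretely, I would redefine the coloring polynomial $C_u$ as the unique polynomial of degree at most $c$ with $C_u(i)=\mathbb{1}[\col(u)=i]$ for $i\in\{0,\ldots,c-1\}$ and $C_u(c)=r_u$, and replace every instance of $\{0,1,2\}$ in the two protocols by $\{0,\ldots,c-1\}$. Since each product $C_uC_v$ now has degree at most $2c$, the helper polynomials $H_u$ and the local shares $P^{(0)}_u$ are chosen as uniformly random polynomials of degree at most $2c$, while the splitting identity $P_u=P^{(0)}_u+\sum_{v\in N(u)}H_v$ remains unchanged. The node's checks are: that $P^{(0)}_u(i)+\sum_{v\in N(u)}H_v(i)=0$ for all $i\in\{0,\ldots,c-1\}$, and that the consistency equation at a shared random $i^\ast\in[q]\setminus[c]$ holds.

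For soundness, I would mirror the argument in the $3$-coloring proof. If the graph is not $c$-colorable, some node $u$ has a neighbor sharing its assigned color $i'\in\{0,\ldots,c-1\}$; either $\tilde P_u(i')\neq 0$ (immediate rejection) or $\tilde P_u(i')=0$ while $\sum_{v\in N(u)}C_u(i')C_v(i')$ is a nonzero integer in $\{1,\ldots,\Delta\}\subseteq\fq$ (using $q>n>\Delta$). The difference of these two degree-$2c$ polynomials has at most $2c$ roots, so the random test at $i^\ast$ fails with probability at most $2c/(q-c)$. Choosing a prime $q\in(3c/s,6c/s]$, whose existence is guaranteed by Bertrand's postulate, gives $2c/(q-c)<s$. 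Since $c\leq n=O(1/s)$, we get $\log q=O(\log(c/s))=O(\log(1/s))$, matching the claimed per-field-element size (the prover's message consists of $O(c)$ field elements, and each inter-node message of $c+1$ field evaluations, each of size $O(\log q)$).

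The zero-knowledge property is obtained by the same four-step hybrid argument as in the proof of Theorem~\ref{thm:3col}. The only place where the constant $3$ appeared was in proving that $C_v(i^\ast)$ is uniformly distributed in $\fq$ whenever $i^\ast$ avoids the color-indicator points; the same bijection between the random scalar $r_v\in\fq$ and the induced evaluation $C_v(i^\ast)\in\fq$ holds for any $i^\ast\in[q]\setminus[c]$, because $C_v$ is determined by $c+1$ interpolation conditions of which only $C_v(c)=r_v$ is random. The intermediate hybrids replacing the honest $H_v(i^\ast)$ and $P^{(0)}_u$ by uniform samples conditioned on the two consistency equations also go through verbatim, since $\sum_{v\in N(u)}H_v$ remains a uniform degree-$2c$ polynomial. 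The main (modest) obstacle is accurately tracking the parameter choice and the prime existence argument; beyond that, the extension is essentially a mechanical re-parametrization.
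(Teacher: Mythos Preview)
Your proposal is correct and follows the same approach as the paper's: lift the coloring polynomial to degree~$c$, the helper and share polynomials to degree~$2c$, bound the soundness error by $2c/(q-c)$, and pick a prime $q\in(3c/s,\,6c/s]$ so that $\log q=O(\log(1/s))$; the zero-knowledge hybrids carry over verbatim. You are in fact more careful than the paper in noting that each message now carries $O(c)$ field elements rather than $O(1)$, so the literal bound $\dzma(\log(1/s),\log(1/s))$ hides a factor of~$c$---an imprecision the paper's sketch shares.
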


\paragraph{Private randomness}
We presented a public randomness protocol, where the randomness was used to pick $i^\ast$.
Our protocol also extends to the private randomness case, at the expense of two communication rounds between neighbors instead of one, and without any additional communication with the prover.
To this end, each node $u$ chooses a uniformly random point $i^\ast_u$ instead of the common $i^\ast$, and verifies \cref{eq:equality-polynomials} at this point.
To this end, $u$ sends $i^\ast_u$ to each neighbor $v\in N(u)$, which answers with $H_v(i^\ast_u)$.
The protocols, proofs, and equations are almost unchanged, except for replacing~$i^\ast$ by~$i^\ast_u$.

\paragraph{IDs and knowledge of $n$}
As a final remark, we note that our protocol also works in the $KT_0$ model, where processes do not initially know the IDs of their neighbors, but only use port numbers; in this case, they shell also not learn the neighbors' IDs. 
In addition, knowledge of $n$ is not necessary, but only a common upper bound on it.
In fact, the protocol can also go through without an upper bound on $n$ if only a common soundness threshold is given ---  the only place $n$ is needed is for setting the field size~$q$, and this can be done as a function of $s$ instead.

\section{A $\dzma$ Protocol for Triangle Freeness}\label{sec:triangle-freeness}

In this section, we use the same principles of our colorability proof to achieve a triangle-freeness protocol. In a sense, it is based on a triangle-freeness distributed proof of Crescenzi, Fraigniaud and Paz~\cite{CrescenziFP19}, but with extensions that make it $\dzma$.

\subsection{The Triangle-Freeness Protocol}

Recall that a graph $G=(V,E)$ is \emph{triangle-free} if for every three nodes $v_1,v_2,v_3\in V$, at least one of the edges $\{v_1,v_2\},\{v_1,v_3\},\{v_2,v_3\}$ is not in $E$.
We denote by $\tfree{}$ the class of all triangle-free graphs (on a given number~$n$ of nodes).

We present a distributed ZK Merlin-Arthur protocol for triangle-freeness, with a trade-off between the certificate sizes and the message sizes. 
A similar result with similar parameters was proven  in~\cite{CrescenziFP19}, but without the ZK property.
Put differently, we show that the prior protocol can be slightly modified to a ZK one without incurring any asymptotic overhead.
We next prove the following.

\thmtrifree

    Let $\alpha$ be the chosen parameter (if $\alpha<3$, replace it by $3$; asymptotically, this will cause overheads).
    Identify the space of node IDs with $[n/\alpha]\times[\alpha]$,
    i.e., identify each node ID $u\in[n]$ with a pair $(i_u,t_u)\in [n/\alpha]\times[\alpha]$.
    If $n/\alpha$ is not integral, use $\lceil n/\alpha \rceil$ instead, and $n/\lceil n/\alpha \rceil$ instead of $\alpha$ (here and henceforth).
    Choose a prime number $q$ such that $n\alpha <q\leq 2n\alpha$, and denote by $\fq$ the finite field with $q$ elements.
    All these parameters are known to all the parties.

    For each node $u$ and each $t\in [\alpha]$, Merlin chooses a uniformly random element of $\fq$ and sends it to the $u$;
    we denote by~$r_{u,t}$ the element received by $u$.
    For every such $u$ and $t$, 
    let $P_{u,t}:\fq\to\fq$ 
    be the unique polynomial with $\deg(P_{u,t})\leq n/\alpha$ satisfying:
    \begin{align}\label{eq:trifree-peralpha}
   	P_{u,t}(i)=
   	\begin{cases}
   		0, & \text{if $0\leq i<n/\alpha$ and $(i,t)\notin N(u)$}\\
   		1, & \text{if $0\leq i<n/\alpha$ and $(i,t)\in N(u)$}\\
   		r_{u,t}, & \text{if $i=n/\alpha$.}
   	\end{cases}
    \end{align}
    For $n/\alpha<i<q$, the values $P_{u,t}(i)$ are uniquely determined due to the bounded degree.

    Observe that if two neighboring nodes $u,v$ are not part of any triangle, then each node $(i,t)\in [n/a]\times[\alpha]$ is not a neighbor of $u$ or not a neighbor of $v$,
    so either $P_{u,t}(i)=0$ or $P_{v,t}(i)=0$.
    Hence, 
    $P_{u,t}(i)P_{v,t}(i)=0$ for all $t$ and $i$.
    
    For a node $u$, let 
    \begin{align}\label{eq:trifree-main}
    P_u=\sum_{v\in N(u)}\sum_{t\in[\alpha]} P_{u,t}P_{v,t}
	\end{align}
    and note that $\deg(P_u)\leq 2n/\alpha$.
    Moreover, for $0\leq i<n/\alpha$ the value $P_u(i)$ is a sum of at most $n\alpha<q$ values which are all in $\{0,1\}$, and hence $P_u(i)=0$ if and only if $P_{u,t}(i)P_{v,t}(i)=0$ for all $v\in N(u)$ and $t\in[\alpha]$.

    The polynomial $P_u$ can be seen as corresponding to $C_u$ in the colorability protocol, and here as well we refrain from giving it to $u$, and split it using helper polynomials given to its neighbors instead.
    Each node $u$ receive a random \emph{helper polynomial} $H_u$ over $\fq$ with degree at most~$2n/\alpha$. 
    Each node $u$ also receives one polynomial $P^{(0)}_u$ with the same degree bound,
    such that
    \[
    P_u=P^{(0)}_u+\sum_{v\in N(u)}H_v.
    \]
    To compute $P_u(i)$, 
    node $u$ computes $P^{(0)}_u(i)$,
    each neighbor $v$ of $u$ computes 
    $H_v(i)$ and sends it to $u$,
    and then $u$ can retrieve $P_i(i)$ by
    $P_u(i)=P^{(0)}_u(i)+\sum_{v\in N(u)}H_v(i).$
    
    The certificate assignment
    procedure goes as follows.
    For each node $u$ and parameter $t\in [\alpha]$, 
    Marlin chooses $r_{u,t}\in [q]$ u.a.r.
    and computes the polynomial $P_u$ by \cref{eq:trifree-peralpha} and \cref{eq:trifree-main} above.
    It also creates the random helper polynomials $H_u$ for each node,
    and uses these polynomials to compute the local share $P^{(0)}_u$ of each $u$.
    Finally, Marlin sends each node $u$ the vector $(r_{u,t})_{t\in[\alpha]}$ of random values, and the coefficients of the polynomials $P^{(0)}_u$ and $H_u$.

    In the verification phase, each node $u$ first verifies that the polynomial induced by  
    $P^{(0)}_u$ received from Merlin and the helper polynomials $H_v$ of the neighbors sum up together to a polynomial that nullifies  for each $i\in[n/\alpha]$.
    This is simply done by evaluating 
    $P^{(0)}_u(i)$ on all $i\in[n/\alpha]$, receiving $H_v(i)$ from each neighbor $v$ for the same values of $i$, and summing these up for each $i$.
    Then, $u$ verifies that the polynomial described by this sum correctly indicates its neighbors. 
    To this end, it uses 
    the values $(r_{u,t})_{t\in[\alpha]}$ received from Merlin to reconstruct the polynomials $P_{u,t}$,
    receives $H_v(i^\ast)$ and $P_{v,t}(i^\ast)$ from each neighbor $v$ and for all $t\in[\alpha]$, and checks that $\sum_{v\in N(u)}\sum_{t\in[\alpha]}
    P_{u,t}(i^\ast) P_{v,t}(i^\ast)
    =
    P^{(0)}_u(i^\ast)+
    \sum_{v\in N(u)}H_v(i^\ast)$.
    
    We start with Merlin's protocol.

    \begin{protocol}[Triangle-freeness protocol]\label{prot:tri-free-merlin}
    Protocol for \textbf{Merlin} with parameter $\alpha$
    \begin{enumerate}[label=(M\arabic*)]
    \item
    For each $u\in V$ and $t\in[\alpha]$
    pick $r_{u,t}\in\fq$ u.a.r. 
    
    \item 
    For each $u\in V$ and $t\in[\alpha]$
    use $r_{u,t}$ to compute $P_{u,t}$
    
    \item 
    For each $u\in V$,
    let $P_u=\sum_{v\in N(u)}\sum_{t\in[\alpha]} P_{u,t}P_{v,t}$
    
    \item 
    For each $u\in V$,
    pick~$2n/\alpha+1$ coefficients in $\fq$ independently and u.a.r. to create a polynomial $H_u$ with $\deg(H_u)\leq 2n/\alpha$
    \item 
    For each $u\in V$,
    let $P^{(0)}_u\gets P_u-\sum_{v\in N(u)}H_v$

    \item\label{Ptri-free-Lsend}
    For each $u\in V$,
    send $(r_{u,t})_t\in[\alpha]$,
    the coefficients of $P^{(0)}_u$ and those of $H_u$ 
    to $u$
    \end{enumerate}
    \end{protocol}

    And move to the nodes' protocol.
    
    \begin{protocol}[Triangle-freeness protocol]\label{prot:tri-free-nodes}
    Protocol for \textbf{node $u$} with parameter $\alpha$
    \begin{enumerate}
        \item
        Receive $(r_{u,t})_{t\in[\alpha]}$, 
        the coefficients of $P^{(0)}_u$ and
        the coefficients of $H_u$ 
        from Merlin
         
        \item 
        Pick $i^\ast\in [q]\setminus[n/\alpha]$ u.a.r. using the shared randomness
        
        \item\label{Ptri-freeLsendHis} 
        Send the evaluations 
        $H_u(i)$, for all $i\in[n/\alpha]\cup\{i^\ast\}$,
        to every neighbor $v\in N(u)$
        
        \item
        From every $v\in N(u)$ 
        receive the evaluations 
        $H_v(i)$ for all $i\in[n/\alpha]\cup\{i^\ast\}$
        
        \item 
        Verify that 
        $P^{(0)}_u(i)+
        \sum_{v\in N(u)}H_v(i)=0$ 
        for all $i\in[n/\alpha]$, otherwise \textbf{reject}
        
        \item 
        Using $(r_{u,t})_{t\in[\alpha]}$,
        construct the polynomials
        $P_{u,t}$
        
        \item\label{Ptri-freeLsendPut}
        For all $t\in[\alpha]$,
        send  $P_{u,t}(i^\ast)$ to all your neighbors
        
        \item
        Receive $P_{v,t}(i^\ast)$ for all $t\in[\alpha]$
        from each neighbor $v$
        
        \item 
        Verify that  
        $\sum_{v\in N(u)}\sum_{t\in[\alpha]}
        P_{u,t}(i^\ast) P_{v,t}(i^\ast)
        =
        P^{(0)}_u(i^\ast)+
        \sum_{v\in N(u)}H_v(i^\ast)$, 
        otherwise \textbf{reject}
        
        \item \textbf{Accept}
    \end{enumerate}
    \end{protocol}
    
    \subsection{Analysis and Zero-Knowledge of the Triangle-Freeness Protocol}
    
    The completeness of the scheme is immediate --- if the graph is triangle free and everyone follow the protocol, then all nodes accept.
    For soundness, 
    let 
    $Q_u =
    P^{(0)}_u+
    \sum_{v\in N(u)}H_v$, where 
    $P^{(0)}_u$ and $(H_v)_{v\in N(u)}$ are the polynomials given to the corresponding nodes by Merlin.
    
    If the graph has a triangle, than there is a node $u$
    and a values $i\in [n/a]$ such that $P_u(i)\neq 0$.
    If the prover supplies the nodes with $Q_u=P_u$, then $u$ will notice that $Q_u(i)\neq 0$ and reject (deterministically).
    
    Finally, we examine the case where $Q_u\neq P_u$ for some node $u$.
    The polynomial $P_u$ is constructed implicitly, as the sum of products of polynomials that depend on the graph structure and the values $r_{u,t}$ received by $u$, in a way guaranteeing it has degree at most $2n/\alpha$.
    The polynomial $Q_u$ is defined explicitly, by the coefficients received by $u$ and its neighbors, and thus also has degree at most $2n/\alpha$.
    Two non-identical polynomials of such degree can be identical on at most $2n/\alpha$ values, thus the probability
    over the choice of  $i^\ast\in [q]\setminus[n/\alpha]$,
    that $P_u(i^\ast)=Q_u(i^\ast)$ is at most $\frac{2n}{\alpha (q-n/\alpha)}<\frac{2}{\alpha^2-1}\leq\frac{1}{4}$.

    \begin{protocol}
    \label{prot:sim-triangle-freeness}
    Simulator for malicious nodes $u$
    
    \begin{enumerate}
    \item Pick ``shared'' randomness $i^\ast$
        \item For each $t\in[\alpha]$ pick $\tilde r_{u,t}\in\fq$ u.a.r. 
        \item Pick $(2n/\alpha + 1)$ coefficients that define two random degree-$(2n/\alpha)$ polynomials $H_u$ and $\tilde P^{(0)}_u$.

    \item \label{PtrianglefLpicki123}
    For each $i \in [n/\alpha]$, pick random values $\tilde{h}_v^i$ for all $v\in N(u)$, such that $\tilde P_u^{(0)}(i) + \sum_{v\in N(u)} \tilde{h}_v^i = 0$  

    \item Pick random values $\tilde{p}_{v,t}, \tilde{h}_v^{i^\ast}$ for all $v\in N(u)$ and $t \in [n/\alpha]$ such that
    \[ 
   \sum_{v\in N(u)}
        \sum_{t\in[\alpha]}
        P_{u,t}(i^\ast) \tilde p_{v,t} =   
       \tilde P^{(0)}_u(i^\ast)+
        \sum_{v\in N(u)}\tilde{h}_v^{i^\ast}.\]
    \item Simulate the message with $\tilde r_u$ and the coefficients of $H_u$ and $\tilde P^{(0)}_u$ from the prover to $u$
    
    \item Simulate the message $(\tilde{h}_v^i)_{i \in [n/\alpha] \cup \{i^*\}}, \tilde{p}_{v,t}$ from each node $v \in N(u)$ to $u$

    \end{enumerate}
    \end{protocol}

    The proof of zero-knowledge closely follows the one presented in \Cref{sec:threecol-polynom}, and it follows by replacing each message from the prover/neighbors by the simulated ones, one by one, and showing that each modification keeps the distribution of the received messages.

\subsection{Discussion and Extensions of the Triangle-Freeness Protocol}

\paragraph{A protocol without IDs}
Unlike the colorability protocol, in the triangle-freeness protocol IDs are used extensively.
It is hence surprising that one can extend the latter protocol to work without IDs, and get the following.

\begin{claim}
	Assume the degrees in the graph are bounded by $\Delta$, and let $n'=\min\{n,\Delta^3\}$.
	Then, for every $3\leq\alpha \leq \sqrt{n'}$ we have $\tfree \in \dzma\left(\frac{n'}{\alpha}\log n',\; \alpha\log n'\right)$
    against a single party.
	This holds even if the nodes do not have IDs and each node knows only the number of neighbors.
\end{claim}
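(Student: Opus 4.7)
The plan is to adapt Protocols~\ref{prot:tri-free-merlin} and~\ref{prot:tri-free-nodes} so that Merlin assigns a \emph{virtual identifier} $\chi(u)\in[n']$ to each node $u$ and then runs the same protocol with virtual IDs in place of the original global IDs. The key observation is that the triangle-detection identity $P_u=\sum_{v\in N(u)}\sum_{t\in[\alpha]} P_{u,t}P_{v,t}$ continues to work correctly provided two conditions are met: (i) the virtual IDs are pairwise distinct on $\{u\}\cup N(u)$, so that $P_{u,t}$ is well defined by \cref{eq:trifree-peralpha} and no two neighbors of $u$ are conflated; and (ii) every common neighbor $w\in N(u)\cap N(v)$ of adjacent nodes receives the same virtual ID from both viewpoints, so that $P_{u,t}(\chi(w))=P_{v,t}(\chi(w))=1$. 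Both conditions are satisfied as soon as $\chi$ is a proper coloring of the distance-$2$ graph $G^{(2)}$ of $G$, since $\{u\}\cup N(u)$ forms a clique in $G^{(2)}$.

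Such a coloring exists with at most $n'$ colors: $G^{(2)}$ has at most $n$ vertices and maximum degree at most $\Delta^2$, so a greedy coloring uses at most $\min(n,\Delta^2+1)\leq n'$ colors. For the zero-knowledge property, Merlin composes $\chi$ with a uniformly random permutation of $[n']$; the tuple of virtual IDs observed by any fixed node $u$ (its own plus those of its $\deg(u)$ neighbors, indexed by port) is then uniformly distributed over $(\deg(u)+1)$-tuples of distinct elements of $[n']$, which the simulator can readily sample. Merlin sends each node $u$: its own virtual ID $\chi(u)$, the virtual IDs of its neighbors listed by port number, and the usual $(r_{u,t})_{t\in[\alpha]}$, $P^{(0)}_u$ and $H_u$, now defined with respect to the virtual IDs. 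In the communication round each node additionally broadcasts $\chi(u)$ on every port, and each receiver rejects on any mismatch with the virtual ID Merlin declared for that port; this adds only $\log n'$ bits per edge.

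The rest of the protocol is unchanged, and completeness, soundness and zero-knowledge all follow directly from the analysis of \cref{thm:trifree} with virtual IDs playing the role of IDs, the simulator of Protocol~\ref{prot:sim-triangle-freeness} being augmented to sample a random tuple of distinct virtual IDs for $u$'s neighborhood. The principal obstacle is that the neighbor-list sent by Merlin costs an additional $O(\Delta\log n')$ bits per node and must fit within the claimed certificate size $O((n'/\alpha)\log n')$; this holds whenever $\Delta\leq n'/\alpha$, and in particular for $n'=\Delta^3$ the constraint $\alpha\leq\sqrt{n'}=\Delta^{3/2}$ already gives $n'/\alpha\geq\Delta^{3/2}\geq\Delta$. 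This bounded-degree regime is precisely where shrinking the virtual-ID space from $n$ to $\Delta^3$ yields the promised improvement.
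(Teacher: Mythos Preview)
Your overall architecture—have Merlin assign virtual IDs via a coloring, permute them randomly, send each node its own color and its neighbors' colors, and then run Protocols~\ref{prot:tri-free-merlin} and~\ref{prot:tri-free-nodes} with virtual IDs—is exactly the paper's approach. The size accounting and the zero-knowledge simulator extension are also fine. However, there is a genuine gap in the \emph{completeness} direction, and it is precisely the reason the claim has $n'=\min\{n,\Delta^3\}$ rather than $\min\{n,\Delta^2+1\}$.

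Your conditions (i) and (ii) are not sufficient for the identity $P_u=\sum_{v\in N(u)}\sum_t P_{u,t}P_{v,t}$ to vanish on $[n'/\alpha]$ in a triangle-free graph. Condition (i) guarantees $P_{u,t}$ is well defined, and condition (ii), as you phrase it, is vacuous once $\chi$ is a single global function. What is missing is the requirement that, for every edge $(u,v)$, no $x\in N(u)$ and $y\in N(v)$ with $x\neq y$ satisfy $\chi(x)=\chi(y)$; otherwise $P_{u,t}(i)P_{v,t}(i)=1$ at the shared virtual ID even though no triangle exists. Since such $x,y$ can be at distance exactly $3$ (via $x\!-\!u\!-\!v\!-\!y$), a coloring of $G^{(2)}$ does not rule this out. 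Concretely, on the path $a\!-\!u\!-\!v\!-\!b$ the assignment $\chi(a)=\chi(b)=1$, $\chi(u)=2$, $\chi(v)=3$ is a proper coloring of $G^{(2)}$, yet $P_{u,\cdot}(1)\cdot P_{v,\cdot}(1)=1$ and $u$ rejects a triangle-free graph. The fix is to use a distance-$3$ coloring (a proper coloring of $G^{(3)}$), which needs up to $\Delta^3$ colors; this is what the paper does, and it is why the bound is $\Delta^3$ rather than $\Delta^2$. With that single change, the rest of your argument goes through unchanged. Incidentally, your added broadcast-and-check of $\chi(u)$ is a good idea: it forces Merlin to use a consistent global $\chi$, which is needed for the ``improper coloring can only create false positives'' soundness argument to apply.
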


To achieve this claim, Merlin chooses a distance-$3$ coloring of the graph, i.e., a coloring where no two nodes of distance at most $3$ have the same color. 
It applies a random permutation on the colors, and assigns each node its color by the permuted coloring.
All the parties then execute the above protocols with the colors instead of IDs. 
Note that any graph with degree at most $\Delta$ has a distance-$3$ coloring using $\Delta^3$ colors;
it also has such a coloring using $n$ colors by assigning each node a unique color.

Merlin informs each node of its own color and the colors of its neighbors. 
This can be sent along with the other information it sends in \Cref{prot:tri-free-merlin}, and no extra communication rounds are needed.
These values can be encoded using  $O(\Delta\log n')$ bits, so asymptotically, no overhead is incurred (as $\Delta\leq\sqrt{n'}\leq n'/\alpha$).

For correctness, note that the algorithm verifies that no node $u$ has a neighbor $v$ such that both $u$ and $v$ have a neighbor $x$. 
When the colors of a distance-$3$ coloring instead of IDs, the same condition is checked.
The colors sent above are enough for node $u$ to follow \Cref{prot:tri-free-nodes},
and the completeness follows.
For soundness, note that a non-proper distance-$3$ coloring can only make nodes believe a triangle exists (if two neighbors $u$ and $v$ have each a different neighbor colored $x$) and thus reject even when it does not exist.
However, it cannot cause the acceptance of a graph where a triangle exists.
For this reason, the nodes do not have to check that the given coloring is a proper distance-$3$ coloring.

\paragraph{Improved soundness}
We conclude by observing that the soundness error can easily be decreased, as in the triangle-freeness case. 
For example, by choosing $n^3/\alpha<q\leq 2n^3/\alpha$, we get soundness smaller than $1/n$ with no asymptotic overhead.

\section{Universal $\dzma$ Protocol}
\label{sec:compiler}

In this section, we show that every graph property that can be proven in 
NP can be also proven in
zero-knowledge in the distributed setting against computationally-bounded
malicious adversaries. 

We start by reviewing some basic concepts in cryptography (bit commitment and NIZKs for NP) and communication complexity (randomized protocols for equality) in \Cref{sec:prelim-compiler} and we then present our protocol, prove its correctness, and discuss its extensions in \Cref{sec:protocol-compiler}.

\subsection{Preliminaries on Cryptography and Communication Complexity}
\label{sec:prelim-compiler}
\subsubsection{Random oracle and cryptographic objects}
\label{sec:ROM}
The random oracle model is the idealized version of hash functions in cryptography. In this model, we assume that all parties have a black-box access to a random function $H$. This model is fundamental in many cryptographic proofs. Moreover, in many of its uses, this random oracle can be replaced by a suitable cryptographic assumption, and one can prove the security of scheme against bounded adversaries. In other cases, such a formal statement is not known to hold, but still there are {\em heuristic implementations} of the random oracle. 

In our presentation, we will make use of the random oracle for building commitment schemes and (centralized) non-interactive zero-knowledge for NP. We discuss in \Cref{sec:discussion-compiler} how to avoid the Random Oracle, and its consequences to our protocol.

We describe now the needed cryptographic primitives.

\paragraph{Bit commitment} A commitment scheme is a two-phase  protocol between two parties, a sender and
receiver. In the first phase, the Sender wants to commit to a message $m$ to the Receiver, without revealing it. For that, the Sender sends the {\em commitment } $c$ to the Receiver. In the second
phase, the Sender will reveal the committed value $m$ by opening the commitment. We require two properties from the commitment scheme:
\begin{itemize}
    \item \textbf{hiding: } the Receiver cannot find $m$ from $c$
    \item \textbf{binding: } for every commitment $c$, there is a single message $m$ that can be opened.
\end{itemize}

It is well-known that there are no commitment schemes with unconditionally hiding
\emph{and} unconditionally binding properties in the plain model (i.e. without trusted assumptions), but we can construct them in the random oracle model. In this case, to commit to a message $m$, the Sender sends $H(m,r)$, for a random string $r$. It is folklore that such a protocol is perfectly binding and hiding.\footnote{The proof can be found, for example, in ~\cite{DodisNotes}.}

\paragraph{NIZKs} The so-called sigma-protocols are zero-knowledge proofs with the following structure:
\begin{enumerate}
	\item The prover sends a message $a$ to the verifier;
	\item The verifier sends a random challenge $c$ to the prover (and keeps no private randomness); and
	\item The prover answers back $z$ to the verifier, who then checks if the test passes or not
\end{enumerate}

This particular syntax is desired not only because of its simplicity, but there are constructions that allow us to directly compile such type of protocols into protocols with different structures. The most important of these compilers is the {\em Fiat-Shamir} transformation~\cite{FiatS86}. The goal of this transformation is to achieve a {\em non-interactive protocol} in the Random Oracle Model. The Fiat-Shamir transformation works roughly as follows.
\begin{enumerate}
	\item The prover computes the first message $a$ of the original sigma-protocol.
	\item The prover computes the third message of the original sigma-protocol, substituting the message from the verifier by  $H(a)$.
	\item The prover sends $(a,z)$ to the verifier, who then checks if the verifier of the sigma protocol would accept the transcript $a,H(a),z$.
\end{enumerate}

It is not hard to see that if the original sigma protocol is complete and zero-knowledge, so is the non-interactive one. Finally, it can be proved that if the challenge space has enough entropy, then the non-interactive protocol is also sound.

\subsubsection{Equality sub-protocol}

\begin{protocol}[Equality sub-protocol]\label{prot:equality} ~\\
	\noindent
	\textbf{Setup:} Each node $u$ has a value $m_u \in \{0,1\}^k$
	\begin{enumerate}
		\item The parties agree on an error-correcting code $\mathcal{C} : \{0,1\}^k \mapsto \{0,1\}^n$ with linear rate and distance
        \item Using shared randomness $i_1,...,i_t$, node $u$ sends $\mathcal{C}(m_u)_{i_1},\cdots,\mathcal{C}(m_u)_{i_t}$ to their neighbors 
        \item Each party compares its values with the ones sent by its neighbors and reject if they are different.
        \item Otherwise, the parties accept.
	\end{enumerate}
\end{protocol}

\begin{lemma}
    If $m_u = m_v$ for all $u,v \in V$, then the protocol accepts with probability $1$. If there exists a pair of neighbors $u,v$ such that $m_u \ne m_v$, then the protocol accepts with probability at most $2^{O(t)}$.
\end{lemma}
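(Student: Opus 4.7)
The plan is to prove completeness and soundness separately, using the properties of the error-correcting code $\mathcal{C}$.

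For completeness, if $m_u = m_v$ for all $u, v \in V$ then in particular $\mathcal{C}(m_u) = \mathcal{C}(m_v)$ for every pair of neighbors. Hence for every sampled index $i_j$, the bit $\mathcal{C}(m_u)_{i_j}$ that $u$ sends equals the bit $\mathcal{C}(m_v)_{i_j}$ that $v$ computes locally, so no node ever rejects and the protocol accepts with probability~$1$.

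For soundness, suppose there exist neighbors $u, v$ with $m_u \ne m_v$. Since $\mathcal{C}$ is an error-correcting code with constant relative distance $\delta > 0$, the codewords $\mathcal{C}(m_u)$ and $\mathcal{C}(m_v)$ differ in at least $\delta n$ coordinates out of $n$. Each index $i_j$ is drawn uniformly (via the shared randomness) from $[n]$, so the probability that $\mathcal{C}(m_u)_{i_j} = \mathcal{C}(m_v)_{i_j}$ is at most $1 - \delta$. Since the $i_j$'s are independent, the probability that $u$ and $v$ agree on all $t$ compared positions, and hence the probability that this particular pair does not trigger a rejection, is at most $(1-\delta)^t = 2^{-\Omega(t)}$. (I assume the statement should read $2^{-\Omega(t)}$ rather than $2^{O(t)}$, which looks like a typo.) Since the overall protocol accepts only if no node rejects, in particular only if $u$ and $v$ do not reject, the acceptance probability is bounded by this same quantity.

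The proof is essentially routine given standard properties of good error-correcting codes (e.g.\ an explicit binary code of linear rate and linear distance, such as a concatenation of Reed–Solomon with a suitable inner code), so there is no real obstacle. The only subtle point to flag is that soundness as stated concerns a pair of neighbors rather than any pair of nodes in $V$; since we only need to bound the probability that this specific pair $(u,v)$ fails to detect their disagreement, this is exactly what the distance argument provides, and we do not need to invoke connectivity of $G$ here.
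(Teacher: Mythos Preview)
Your proof is correct and is exactly the standard argument the paper has in mind: the paper does not actually write out a proof, but simply states that it ``easily follows from the randomized protocol for equality studied in communication complexity'' and cites Kushilevitz--Nisan. You have supplied precisely that argument, and your observation that $2^{O(t)}$ is a typo for $2^{-\Omega(t)}$ is also correct.
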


The proof of the lemma easily follows from the randomized protocol for equality studied in communication complexity~\cite{KushilevitzN97}.

\subsection{The Universal $\dzma$ Protocol}
\label{sec:protocol-compiler}

The main idea of our protocol is that the prover commits to the whole graph to
each of the nodes, and then proves to each node $u$ in zero-knowledge that $i)$ the committed
graph is consistent with the local view of $u$ and $ii)$ that the committed graph
has the property $P$. 
Finally, the nodes run a sub-protocol to verify the commitment of the graph sent by the prover is equal among neighbors.

Throughout this section, we assume that all the nodes agree on a canonical ordering $v_0,...,v_n$.  

\begin{protocol}[Universal $\dzma$ protocol] \label{prot:universal-zk} 
\noindent
  \begin{enumerate}
    \item To each node $u$, 
    the prover sends $\left(c^u_{i,j}\right)_{i,j\in [n]}$, where $c^u_{i,j}
      = c_{i,j} = comm(\mathbf{1}_{\{v_i,v_j\} \in E},r_{i,j})$. The prover also sends $v_i$ the opening of $c^{v_i}_{i,j}$ and $c^{v_i}_{j,i}$, for every $j \in [n]$

  \item The prover proves to each node $v_k$ in zero-knowledge that
    $\left(c^{v_k}_{i,j}\right)_{i,j\in [n]} \in L_P$, where 
    \[L_P = \{z \mid \exists (G' = (V,E'), r_{i,j}) , \text{ s.t. } z = comm(\mathbf{1}_{\{v_i,v_j\} \in E'},r_{i,j})_{i,j \in [n]} \wedge
       G' \text{ has property } P 
      \}\]
     
  \item All nodes perform the equality sub-protocol (\Cref{prot:equality}) with $t = \log n$ on
    $\left(c^u_{i,j}\right)_{i,j\in [n]}$, and reject if the sub-protocol rejects.
  \item Each node accepts iff the commitments of its edges open to the correct values, and both the zero-knowledge proof and the equality sub-protocol pass\label{step:verif2-universal}
\end{enumerate}
\end{protocol}

\thmuniversal

\subsection{Analysis and Zero-Knowledge of the Universal $\dzma$ Protocol}
We split the proof of \Cref{thm:universal-zk} in two parts. In \Cref{lem:universal-completeness} we show that \Cref{prot:universal-zk} is complete and sound, and analyze its complexity. Then, we prove the zero-knowledge property, in \Cref{lem:universal-zk}.

\begin{lemma}\label{lem:universal-completeness}
  \Cref{prot:universal-zk} has perfect completeness and soundness $O\left(\frac{1}{n}\right)$. Moreover, it follows the complexity indicated in \Cref{thm:universal-zk}.
\end{lemma}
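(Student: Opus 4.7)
My plan is to handle the three properties in the order stated: completeness, soundness, and complexity, since completeness is immediate while soundness requires a careful case analysis on the prover's cheating strategy.

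For completeness, I would simply observe that an honest prover can commit to the true adjacency matrix using one shared matrix $(c_{i,j})_{i,j\in[n]}$, send this same matrix to every node, open the entries incident to each $v_i$ to their correct values, and produce an accepting centralized NIZK for $L_P$ (since the committed graph equals $G$, which has property $P$). Then every node sees consistent openings matching its local view, the equality sub-protocol accepts with probability $1$ because all $m_u$ are identical, and the NIZK verifies by the completeness of the underlying (centralized) NIZK. All checks pass deterministically.

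For soundness I would split into two cases based on what the (possibly malicious) prover sends. \textbf{Case 1:} some pair of neighbors $u,v$ receives different commitment matrices, i.e.\ $m_u \neq m_v$. Then the equality sub-protocol rejects with probability $1 - O(1/n)$, invoking the stated guarantee of Protocol~\ref{prot:equality} with $t = \log n$. \textbf{Case 2:} all nodes receive the same matrix $(c_{i,j})$. By the (statistically) binding property of the ROM-based commitment scheme, this matrix uniquely determines a graph $G'$. If $G' \neq G$, pick any edge on which they differ; the endpoint of that edge checks in step~(\ref{step:verif2-universal}) that its opening matches its own local view, and rejects deterministically. If $G' = G$, then since $G \notin P$ we have $(c_{i,j}) \notin L_P$, so by the soundness of the centralized NIZK (obtained via Fiat--Shamir in the ROM), the prover makes every node's NIZK check accept only with negligible probability, in particular $O(1/n)$. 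Combining the two cases gives soundness $O(1/n)$.

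For complexity, I would tally each message: in step~$1$ the prover sends $n^2$ commitments plus $O(n)$ openings to each node, each of size $\operatorname{poly}(\lambda)$; in step~$2$ the prover sends one NIZK of size $\operatorname{poly}(n)$ to each node (the polynomial depending on the reduction from $P$ to an NP-complete problem); these together contribute $\operatorname{poly}(n)$ bits per node from the prover. Between neighbors, the only exchange occurs inside Protocol~\ref{prot:equality}, which sends $t = \log n$ bits per edge. Hence the complexity matches the $\dzma(\operatorname{poly} n, \log n)$ bound claimed in Theorem~\ref{thm:universal-zk}.

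The main obstacle is the clean separation in Case~2 of soundness: I need the binding of the ROM commitment to be strong enough that the induced $G'$ is well-defined independently of the prover's openings, and I need the NIZK soundness to apply uniformly to each node even though the prover might tailor a different proof per node. Both are standard once one works in the ROM, but it is worth making the binding/soundness invocations explicit rather than handwaving them.
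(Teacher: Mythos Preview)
Your proposal is correct and follows essentially the same approach as the paper: the completeness argument is identical, the soundness argument uses the same two-case split (equality sub-protocol catches inconsistent commitments, then binding plus NIZK soundness handles the rest), and the complexity tally matches. The only cosmetic difference is that the paper explicitly invokes connectivity of $G$ to pass from ``some two nodes receive different commitments'' to ``some two \emph{neighbors} do,'' whereas you fold this into your case split implicitly; you may want to state that step to make the exhaustiveness of your two cases transparent.
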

\begin{proof}
  If the graph satisfies $P$ and the prover is honest, then we
  have that $a)$ the committed graph corresponds to the one in the network, $b)$
  each zero-knowledge proof is accepted, and $c)$ since the proof used the same
  commitment to each node, the equality sub-protocol also leads to acceptance.

  We now prove soundness. If the prover sends different $c^u_{i,j}$ and
  $c^v_{i,j}$ for {\em some} pair of vertices $u$ and $v$, then it these values must be different for some pair of two {\em neighbors} $u'$
  and $v'$, since the graph is connected. 
  In this case, the probability that $u'$ and $v'$ do not reject in the equality protocol is at most $O(\frac{1}{n})$.
  We continue now assuming that for all  $u \in V$, $i,j \in [n]$, there exists some $\hat{c}_{i,j}$ such that $c^u_{i,j} = \hat{c}_{i,j}$.

 Since the commitment scheme is perfectly binding, we have that if the prover is
  able to open the commitments with a view that is consistent with the local
  view of all nodes, then the committed graph indeed corresponds to the
  underlying network. In this case, we have that the committed graph does not
  have the property $P$, and therefore $\left(c^u_{i,j}\right)_{i,j\in [n]} \not\in
  L_P$. By the soundness property of the
  zero-knowledge protocol, each node accepts with probability at
  most $\negl(n)$.

Regarding the number of rounds of the protocol, we notice that the commitments can be sent at the same time as the first message of the centralized NIZK and therefore it does not incurs in any extra round of communication with the prover. Moreover, we notice that the required amount of communication from the prover is $O(n^2 \polylog(n))$ for the commitments and $\poly n$ for the NIZK proof. We notice that we cannot precise the precise polynomial, since this depends on the property being proven. Finally, the communication between the nodes is $\log n$.
\end{proof}

We now prove the zero-knowledge property. 
We notice that we consider zero-knowledge against any set $M$ of malicious nodes.
In this setting, the simulator has to create a transcript for all the nodes of $M$ together, and so it is given the neighborhood of all nodes in $M$. We define our simulator in \Cref{prot:sim-universal-zk}.

\begin{protocol}[Simulator for a set $M$ of malicious nodes]
\label{prot:sim-universal-zk}~ 
  \begin{enumerate}
    \item Compute $\left(\hat{c}_{i,j}\right)_{i,j \in [n]}$ where $\hat{c}_{i,j}
      = \begin{cases} 
        comm(1_{\{i,j\} \in E},r_{i,j}), & \text{if $i \in M$ or
        $j \in M$}\\
      comm(0,r_{i,j}), & \text{otherwise} \end{cases}
                      $

    \item Send $\left(\hat{c}_{i,j}\right)_{i,j \in [n]}$ to
      every $u \in M$, along with the opening of  $\hat{c}_{u,j}$ and $\hat{c}_{j,u}$, for every $j$
    \item Run the zero-knowledge simulation with every node in $M$
    \item Run the simulation for the equality protocol
    \item Output the transcript of the simulation
  \end{enumerate}
\end{protocol}

\begin{lemma}\label{lem:universal-zk}
Let  $\mathcal{R}$ be the distribution of transcripts in the real run of the protocol and $\mathcal{S}$ be the distribution of transcripts from the simulator described in 
\Cref{prot:sim-universal-zk}. We have that for every polynomial-time algorithm $\mathcal{D}$:
\[\Pr_{\tau_R \sim \mathcal{R}}[\mathcal{D}(\tau_R) = 1] - \Pr_{\tau_S \sim \mathcal{S}}[\mathcal{D}(\tau_S) = 1] = \negl(\lambda).\]
\end{lemma}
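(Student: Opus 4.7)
The plan is to establish indistinguishability through a sequence of hybrid distributions, moving one cryptographic primitive at a time from its honest instantiation to its simulated counterpart. Specifically, I would define:

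\textbf{Hybrid $H_0$:} the real distribution $\mathcal{R}$, where the prover commits honestly to the true adjacency matrix, opens the correct edge bits to nodes in $M$, runs the honest centralized NIZK, and the nodes run the honest equality sub-protocol.

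\textbf{Hybrid $H_1$:} same as $H_0$, except that for each $u \in M$ the centralized NIZK transcript is replaced by the output of the NIZK simulator on statement $(\hat c_{i,j})_{i,j\in[n]}$.

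\textbf{Hybrid $H_2$:} same as $H_1$, except that the messages exchanged during the equality sub-protocol among the nodes in $M$ (and with their neighbors in $M$) are produced by the simulator for the equality sub-protocol on a common committed matrix.

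\textbf{Hybrid $H_3$:} same as $H_2$, except that for every pair $(i,j)$ with $i \notin M$ and $j \notin M$, the commitment $c_{i,j}$ is replaced by $\mathrm{comm}(0, r_{i,j})$. This distribution is exactly $\mathcal{S}$, the output of \Cref{prot:sim-universal-zk}.

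The main technical content is showing that consecutive hybrids are indistinguishable. The transition $H_0 \leadsto H_1$ reduces directly to the zero-knowledge property of the centralized NIZK in the random oracle model: any distinguisher here would yield a distinguisher between the honest prover's transcripts and the NIZK simulator's output (applied $|M|$ times, by a standard sub-hybrid argument over the nodes in $M$). The transition $H_1 \leadsto H_2$ is analogous and invokes the (statistical) simulatability of the equality sub-protocol, noting that in both hybrids all nodes see the same committed matrix, so the simulator for equality has a well-defined input.

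The hardest and most delicate step is $H_2 \leadsto H_3$. Here one must argue that swapping a polynomial number of commitments from encoding the true adjacency bit to encoding $0$ is indistinguishable. I would proceed via a secondary sequence of sub-hybrids, one per pair $(i,j)$ with $i,j \notin M$, replacing $c_{i,j}$ one at a time; each single replacement is indistinguishable by the hiding property of the commitment scheme (which in the ROM is in fact statistical, since $r_{i,j}$ is fresh and unopened, so the two commitments are identically distributed from the adversary's viewpoint provided the random oracle is not queried on $(\mathbf{1}_{\{v_i,v_j\}\in E}, r_{i,j})$ or $(0, r_{i,j})$; this query event has negligible probability in a polynomial-time experiment). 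Summing over the $O(n^2)$ sub-hybrids yields a negligible total gap. The key subtlety to handle carefully is that the simulator never has to open the swapped commitments: they are precisely those for which neither endpoint is in $M$, so no opening is ever revealed in the transcript, and the committed value never leaks. Chaining the four hybrids and applying the triangle inequality to their distinguishing advantages gives the desired $\negl(\lambda)$ bound.
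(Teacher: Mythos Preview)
Your proposal is correct and follows essentially the same hybrid strategy as the paper: replace the centralized NIZK by its simulator, replace the equality sub-protocol by its simulation, and then swap the unopened commitments one at a time via sub-hybrids reducing to the hiding property. The paper swaps your first two transitions (it simulates equality before the NIZK), but since the equality sub-protocol depends only on the committed string and the NIZK reduction can equally well run the honest equality protocol given the commitments, both orderings go through; otherwise the arguments are the same.
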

\begin{proof}
 In order to prove this statement, we consider the following hybrids:
  \begin{description}
    \item[Hybrid 0:] The transcript of the real run of \Cref{prot:universal-zk}
    \item[Hybrid 1:] The same as hybrid $0$, but the equality protocol is replaced by its simulations using the provided commitment to every node $u$ in the network.
    \item[Hybrid 2:] The same as hybrid $1$, but instead of the prover and the nodes
      in $M$ running the zero-knowledge protocol, they run the simulation of
      the zero-knowledge protocol.
    \item[Hybrid 3:] The same as hybrid $2$, but the prover replaces the
      commitments of the graph by commitments of $\hat{c}_{i,j}$
    (\Cref{prot:sim-universal-zk}).
  \end{description}

  We show now that the output distribution of \textbf{Hybrid i} is computationally close to the output distribution of \textbf{Hybrid~i+1}. For that, let us define $\mathcal{H}_i$ as the distribution on the transcripts of \textbf{Hybrid i}. Notice that $\mathcal{R} = \mathcal{H}_0$ and $\mathcal{S} = \mathcal{H}_3$.

  \paragraph{Hybrid 0 vs. Hybrid 1.}  
  In an honest run of \Cref{prot:universal-zk}, the prover provides the same commitments to all nodes.
  Hence, the simulation of the equality protocol with the commitment provided to node $u$ has the exact same output distribution as the original protocol. Therefore, for any (possibly unbounded) $\mathcal{D}$ 
    \[\Pr_{\tau_0 \sim \mathcal{H}_0}[\mathcal{D}(\tau_0) = 1] - \Pr_{\tau_1 \sim \mathcal{H}_1}[\mathcal{D}(\tau_1) = 1] = 0.\]
    
  \paragraph{Hybrid 1 vs. Hybrid 2.}  
  We show now that distinguishing the output of \textbf{Hybrid 1} and \textbf{Hybrid 2} enables to break the underlying centralized zero-knowledge protocol.
  By assumption, let $\mathcal{D}$ be a polynomial-time algorithm and $p$ be a polynomial such that
  \[\left| \Pr_{\tau_1 \sim \mathcal{H}_1}[\mathcal{D}(\tau_1) = 1] - \Pr_{\tau_2 \sim \mathcal{H}_2}[\mathcal{D}(\tau_2) = 1]\right|   \geq \frac{1}{p(n)}.\]
  
  We can construct a distinguisher $\mathcal{D}'$ for the zero-knowledge protocol for $L_P$ as follows.
  \begin{enumerate}
    \item On input $\left(\hat{c}_{i,j}\right)_{i,j}$ and transcript $\pi$, run the simulation of the communication of all nodes in $M$ and compute $\tau_{EQ}$
    \item Run $\mathcal{D}$ on input $ \tau = (\left(\hat{c}_{i,j}\right)_{i,j},\pi,\tau_{EQ})$.
  \end{enumerate}
  
  Notice that $\mathcal{D}'$ also perfectly simulates the interaction with the neighbors and therefore the success probability of $\mathcal{D}'$ is exactly the same as the one of $\mathcal{D}$. Therefore, there exists a polynomial-time algorithm $\mathcal{D}'$ that distinguishes the real transcript and the simulator for the zero-knowledge protocol for $L_P$, which is a contradiction.

  \paragraph{Hybrid 2 vs. Hybrid 3.} 
  We will consider the sub-hybrids $2.i.j$, for $i,j \in [n+1]$.  In each sub-hybrid, we will change the commitments that are not opened to $M$ in the real protocol by commitments of $0$, one by one.

  \begin{description}
    \item[Hybrid $2.i^*.j^*:$] The commitments are computed as following:
    \[
\hat{c}_{i,j}
      = \begin{cases} 
        comm(1_{\{v_i,v_j\} \in E},r_{i,j}), & \text{if $v_i \in M$ or
        $v_j\in M$ or $i > i^*$ or ($i = i^*$ and $j > j^*$)}\\
      comm(0,r_{i,j}), & \text{otherwise} \end{cases}. \]
   Then, we continue with the simulation of the zero-knowledge proof and the simulation of the equality protocol as in Hybrid $2$.
  \end{description}

Notice that $a)$ in Hybrid $2.0.0$, all committed values correspond to the edges in the protocol, which is the exact same setting of Hybrid $2$; $b)$ Hybrid $2.k.n$ is exactly the same as Hybrid $2.k+1.0$; and $c)$ Hybrid $2.n.n$ is equivalent to Hybrid $3$, since the unopened commitments that are given to $M$ contain the value $0$. 

  Let $\mathcal{D}$ be an arbitrary polynomial-time algorithm and let us define
  \[
  \eps_{i,j} = \begin{cases}
  \left| \Pr_{\tau_{i,j} \sim \mathcal{H}_{2.j.k}}[\mathcal{D}(\tau_{i,j}) = 1] - \Pr_{\tau_{i,j+1} \sim \mathcal{H}_{2.j.k+1}}[\mathcal{D}(\tau_{i,j+1}) = 1]\right|,& \text{if } k < n \\
   \left|  \Pr_{\tau_{i,j} \sim \mathcal{H}_{2.j.k}}[\mathcal{D}(\tau_{i,j}) = 1] - \Pr_{\tau_{i,j+1} \sim \mathcal{H}_{2.j+1.0}}[\mathcal{D}(\tau_{j+1,1}) = 1]\right|,& \text{if } k = n
    \end{cases}
  \]  

  We have that by triangle inequality
    \[\left|\Pr_{\tau_3 \sim \mathcal{H}_3}[\mathcal{D}(\tau_3) = 1] - \Pr_{\tau_2 \sim \mathcal{H}_2}[\mathcal{D}(\tau_2) = 1] \right| \leq \sum_{i,j} \eps_{i,j}.\]

  We now show that if there is a graph $G$, a set $M$ of malicious nodes in $G$, two indices $i,j$ and a polynomial $p$ such that $\eps_{i,j} \geq \frac{1}{p(n)}$, then the commitment scheme is not computationally hiding against non-uniform adversaries. Therefore, assuming the hiding property of the commitment scheme, we can conclude that 
      \[\left|\Pr_{\tau_3 \sim \mathcal{H}_3}[\mathcal{D}(\tau_3) = 1] - \Pr_{\tau_2 \sim \mathcal{H}_2}[\mathcal{D}(\tau_2) = 1]\right| =  \negl(n).\]

  To finish the proof, let us suppose that $\eps_{i,j} \geq \frac{1}{p(n)}$, where $G,M,i,j,p$ are as above.
  We define the following non-uniform adversary $\mathcal{A}$ that breaks the hiding property of the commitment scheme, thus reaching a contradiction.
  \begin{enumerate}
      \item The values of $G,M,i,j$ are provided as non-uniform advice to $\mathcal{A}$
      \item $\mathcal{A}$ is given the challenge commitment $c$
      \item $\mathcal{A}$ computes 
          \[
\hat{c}_{i,j}
      = \begin{cases} 
        comm(1_{\{v_i,v_j\} \in E},r_{i,j}), & \text{if $i \in M$ or
        $j \in M$ or $i > i^*$ or ($i = i^*$ and $j > j^*$)}\\
                c, & \text{if $i = i^*$ and $j = j^*$}\\
      comm(0,r_{i,j}), & \text{otherwise} \end{cases} \]
      \item $\mathcal{A}$
        follows the simulation of the zero-knowledge proof and the simulation of the equality protocol as in Hybrid $2$
    \item $\mathcal{A}$ runs $\mathcal{D}$ on the computed values
  \end{enumerate}

We have that 
\begin{align*}
&\left|\Pr[\mathcal{A}(comm(0,r')) = 1] - \Pr[\mathcal{A}(comm(1,r')) = 1]\right|
\\
 &=  \left| \Pr_{\tau_{i,j} \sim \mathcal{H}_{1.j.k}}[\mathcal{D}(\tau_{i,j}) 
=1] - \Pr_{\tau_{i,j+1} \sim \mathcal{H}_{1.j.k+1}}[\mathcal{D}(\tau_{i,j+1}) = 1]\right| = \eps_{i,j},
\end{align*}
where the first equality follows from construction and the second equality by contradiction. Since we assume that our protocol is hiding, we have that $\eps_{i,j} = \negl(n)$.
\end{proof}

\subsection{Discussion and Extensions of the Universal $\dzma$ Protocol}
\label{sec:discussion-compiler}

We notice that as in our other protocols, we can tune the parameter $t$ as a tradeoff between communication between neighbors and soundness. 
For instance,  increasing the communication between neighbors to poly-logarithmic allows to achieve negligible soundness error. Alternatively, if constant soundness is sufficient then each node only needs to send $O(1)$ bits to each of its neighbors.

One downside of our protocol is that it works in the random oracle model. We notice that we can build NIZKs and bit-commitments under standard cryptographic assumptions~\cite{Naor89,BlumFM88,PeikertS19}, and therefore we can also achieve $\dzma$ for NP which is ZK against \emph{computationally bounded} malicious nodes {\em without using a random oracle}. 

As we mentioned in \Cref{sec:intro-compiler}, \cite{BickKO22} showed a compiler from PLS to distributed zero-knowledge proofs. 
We notice that this compiler can also be instantiated with the trivial PLS for NP graph properties where the prover sends the whole graph along with the NP proof, and the nodes verify if the graph is correct and locally check if it has the desired property.  
Their protocol builds upon a PLS of $\ell$-bit certificates, which is verified by each node using a circuit of size $s$.
Our protocol is based on a (centralized) NIZK $\pi$ that proves that the committed graph has the desired property.
\Cref{tab:comparison} compares the parameters of the two compilers. 
Roughly, their compiler is much more efficient for properties that admit small PLS with fast local verification and requires no cryptographic assumptions,
while our protocol achieves strong zero-knowledge against any number of colluding parties, and guarantees low-communication among neighbors regardless of the problem in hand.

\begin{table}[]
    \centering
    \begin{tabularx}{\textwidth}{|X|X|X|}
     \hline
         & \cite{BickKO22} & Our compiler \\ \hline
      Assumption   & \textcolor{blue}{None} & \textcolor{red}{ROM or Cryptographic assumption} \\ \hline
      Type of distributed zero-knowledge  & \textcolor{red}{Weak} & \textcolor{blue}{Strong} \\ \hline
      Max. size of coalition  ($k$)  & \textcolor{red}{ $k = o(n)$} & \textcolor{blue}{$n$} \\ \hline
      Communication with prover   & 
      \textcolor{teal}{$1$ round \newline $O(k^2 (\Delta \ell + s)\log n)$-bits}
       & \textcolor{teal}{1 round \newline $O(n^2 + |\pi|$)-bits} \\ \hline
      Communication with neighbors   & \textcolor{red}{$O(k)$ rounds} \newline \textcolor{red}{$O(k^2 (\Delta \ell + s)\log n)$-bits}  & \textcolor{blue}{ 1 round}  \newline \textcolor{blue}{$O(\log n)$-bits} \\ \hline
    \end{tabularx}
    \caption{Comparison between the compilers of \cite{BickKO22} and the one from this work.
    We highlight the \textcolor{blue}{advantages}, \textcolor{red}{disadvantages}, and \textcolor{teal}{equivalence/incomparability}.
    For the type of distributed zero-knowledge we use the notation from~\cite{BickKO22}; our work only considers their {\em strong} version of distributed zero-knowledge.} 
    \label{tab:comparison}
\end{table}

We would like to end this comparison by stressing that in both protocols, the communication between prover and nodes is polynomial, but the precise polynomial depends on the parameters from the base protocol (PLS in their case, NIZK in ours).

Finally, we also notice that our compiler can be easily extended to achieve, under computational assumptions,  distributed {\em interactive} zero-knowledge proofs for any graph property that can be verified in PSPACE. 
This follows from the fact that under the assumption that bit-commitments exist, the set of problems that have interactive computational zero-knowledge proofs is equal to PSPACE~\cite{BenOrGGHKMR88}. As in the $\dzma$ case, the round and communication complexity of the protocol will intrinsically depend on the graph property being proven.

\begin{corollary}
   Any graph property in PSPACE admits a distributed computational zero-knowledge proof.
\end{corollary}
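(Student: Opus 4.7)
The plan is to mirror the construction of the universal $\dzma$ protocol (Protocol~\ref{prot:universal-zk}), replacing its single building block that is specific to NP, namely the centralized NIZK for $L_P$, by a centralized (interactive) computational zero-knowledge proof for $L_P$. Since $P \in \mathrm{PSPACE}$ and we can efficiently check in PSPACE whether a decommitted adjacency matrix satisfies $P$, the language $L_P$ (graphs committed via $\mathrm{comm}$ whose decommitment is in $P$) still lies in PSPACE, so by Ben-Or et al.~\cite{BenOrGGHKMR88} it admits an interactive computational zero-knowledge proof assuming bit-commitments exist, which in turn follows from the same cryptographic assumptions used in \Cref{sec:compiler}.

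Concretely, the distributed protocol keeps Steps~(1), (3) and~(4) of \Cref{prot:universal-zk} verbatim: the prover sends commitments $\bigl(c^u_{i,j}\bigr)_{i,j\in[n]}$ to every node~$u$ together with the openings of~$u$'s own row and column, and the nodes run the equality sub-protocol (\Cref{prot:equality}) on the commitment matrix. Step~(2) is replaced by the prover running, in parallel with each node $v_k$ as an independent verifier, the centralized interactive computational ZK proof that the committed matrix is in $L_P$. Each node accepts iff its own local openings are consistent, the equality sub-protocol accepts, and its instance of the interactive ZK proof accepts.

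Completeness is immediate from the completeness of each underlying primitive. For soundness, the same case analysis as in \Cref{lem:universal-completeness} applies: either some pair of neighbors receives inconsistent commitment matrices, and the equality sub-protocol rejects except with probability $O(1/n)$; or all nodes receive the same matrix, in which case by the binding property of the commitments it is uniquely tied to the true graph, and the soundness of the centralized interactive proof forces rejection with overwhelming probability. For zero-knowledge, I would use exactly the hybrid sequence of \Cref{lem:universal-zk}: first replace the equality sub-protocol by its simulation on the provided commitments, then replace each instance of the centralized interactive ZK proof by the simulator guaranteed by~\cite{BenOrGGHKMR88}, and finally swap the commitments of non-incident edges for commitments of~$0$ using the computational hiding property in a standard edge-by-edge hybrid.

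The main obstacle I anticipate is subtler than in the NIZK case: the centralized ZK proof is now multi-round, and soundness of the interactive ZK protocol of~\cite{BenOrGGHKMR88} is typically only guaranteed against a cheating prover that commits to its first message before seeing the verifier's challenges. One must therefore take care that the distributed prover runs an independent execution of the interactive proof with each node, using fresh commitment randomness, and that the hybrid argument for zero-knowledge handles a coalition $M$ of nodes by invoking the (sequential or parallel) composition properties of computational ZK. Beyond this bookkeeping, the round and communication complexities inherit the polynomial dependence of the centralized proof on the complexity of~$P$, exactly as flagged in the preceding discussion, and the final statement follows.
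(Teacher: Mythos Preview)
Your proposal is correct and takes essentially the same approach as the paper, which merely remarks that the compiler extends to PSPACE by swapping the centralized NIZK for the interactive computational zero-knowledge proofs of~\cite{BenOrGGHKMR88} (under bit-commitments). In fact you provide considerably more detail than the paper's own one-paragraph sketch; the composition subtleties you flag are not addressed there either, but they do not obstruct the argument.
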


\bibliographystyle{alpha}
\bibliography{zkbib}

\newcommand{\etalchar}[1]{$^{#1}$}
\begin{thebibliography}{BGG{\etalchar{+}}88}

\bibitem[BCG{\etalchar{+}}14]{Ben-SassonCG0MTV14}
Eli Ben{-}Sasson, Alessandro Chiesa, Christina Garman, Matthew Green, Ian
  Miers, Eran Tromer, and Madars Virza.
\newblock Zerocash: Decentralized anonymous payments from bitcoin.
\newblock In {\em 2014 {IEEE} Symposium on Security and Privacy, {SP} 2014},
  pages 459--474. {IEEE} Computer Society, 2014.

\bibitem[BFM88]{BlumFM88}
Manuel Blum, Paul Feldman, and Silvio Micali.
\newblock Non-interactive zero-knowledge and its applications (extended
  abstract).
\newblock In Janos Simon, editor, {\em Proceedings of the 20th Annual {ACM}
  Symposium on Theory of Computing, {STOC} 88}, pages 103--112. {ACM}, 1988.

\bibitem[BGG{\etalchar{+}}88]{BenOrGGHKMR88}
Michael Ben{-}Or, Oded Goldreich, Shafi Goldwasser, Johan H{\aa}stad, Joe
  Kilian, Silvio Micali, and Phillip Rogaway.
\newblock Everything provable is provable in zero-knowledge.
\newblock In {\em {CRYPTO} '88}, volume 403, pages 37--56, 1988.

\bibitem[BKO22]{BickKO22}
Aviv Bick, Gillat Kol, and Rotem Oshman.
\newblock Distributed zero-knowledge proofs over networks.
\newblock In {\em Proceedings of the 2022 Annual ACM-SIAM Symposium on Discrete
  Algorithms (SODA)}, pages 2426--2458. SIAM, 2022.

\bibitem[CFP19]{CrescenziFP19}
Pierluigi Crescenzi, Pierre Fraigniaud, and Ami Paz.
\newblock Trade-offs in distributed interactive proofs.
\newblock In {\em {DISC}}, volume 146 of {\em LIPIcs}, pages 13:1--13:17.
  Schloss Dagstuhl - Leibniz-Zentrum f{\"{u}}r Informatik, 2019.

\bibitem[Dod12]{DodisNotes}
Yevgeniy Dodis.
\newblock {CSCI-GA.3210-001, Lecture Notes: Introduction to Cryptography },
  2012.
\newblock URL:
  \url{https://piazza.com/class_profile/get_resource/iiauye4zvj2hu/inhhhsyt9x74kl}.
  Last visited on 2025/01/31.

\bibitem[FKP13]{FKP13}
Pierre Fraigniaud, Amos Korman, and David Peleg.
\newblock Towards a complexity theory for local distributed computing.
\newblock {\em J. {ACM}}, 60(5):35:1--35:26, 2013.

\bibitem[FPP19]{FraigniaudPP19}
Pierre Fraigniaud, Boaz Patt{-}Shamir, and Mor Perry.
\newblock Randomized proof-labeling schemes.
\newblock {\em Distributed Comput.}, 32(3):217--234, 2019.

\bibitem[FS86]{FiatS86}
Amos Fiat and Adi Shamir.
\newblock How to prove yourself: Practical solutions to identification and
  signature problems.
\newblock In Andrew~M. Odlyzko, editor, {\em Advances in Cryptology - {CRYPTO}
  '86}, 1986.

\bibitem[GS16]{GS16}
Mika G{\"{o}}{\"{o}}s and Jukka Suomela.
\newblock Locally checkable proofs in distributed computing.
\newblock {\em Theory of Computing}, 12(1):1--33, 2016.

\bibitem[IKM{\etalchar{+}}13]{IshaiKMOP13}
Yuval Ishai, Eyal Kushilevitz, Sigurd Meldgaard, Claudio Orlandi, and Anat
  Paskin{-}Cherniavsky.
\newblock On the power of correlated randomness in secure computation.
\newblock In {\em Theory of Cryptography - 10th Theory of Cryptography
  Conference, {TCC} 2013}, 2013.

\bibitem[KKP10]{KKP10}
Amos Korman, Shay Kutten, and David Peleg.
\newblock Proof labeling schemes.
\newblock {\em Distributed Computing}, 22(4):215--233, 2010.

\bibitem[KN97]{KushilevitzN97}
Eyal Kushilevitz and Noam Nisan.
\newblock {\em Communication complexity}.
\newblock Cambridge University Press, 1997.

\bibitem[KOS18]{KOS18}
Gillat Kol, Rotem Oshman, and Raghuvansh~R. Saxena.
\newblock Interactive distributed proofs.
\newblock In {\em Proceedings of the 2018 {ACM} Symposium on Principles of
  Distributed Computing, {PODC}}, pages 255--264, 2018.

\bibitem[Nao89]{Naor89}
Moni Naor.
\newblock Bit commitment using pseudo-randomness.
\newblock In Gilles Brassard, editor, {\em Advances in Cryptology - {CRYPTO}
  '89}, volume 435 of {\em Lecture Notes in Computer Science}, pages 128--136.
  Springer, 1989.

\bibitem[PS05]{Passs05}
Rafael Pass and Abhi Shelat.
\newblock Unconditional characterizations of non-interactive zero-knowledge.
\newblock In {\em Advances in Cryptology - {CRYPTO} 2005}, 2005.

\bibitem[PS19]{PeikertS19}
Chris Peikert and Sina Shiehian.
\newblock Noninteractive zero knowledge for {NP} from (plain) learning with
  errors.
\newblock In Alexandra Boldyreva and Daniele Micciancio, editors, {\em {CRYPTO}
  2019}, 2019.

\bibitem[Sha79]{Shamir79}
Adi Shamir.
\newblock How to share a secret.
\newblock {\em Commun. {ACM}}, 22(11):612--613, 1979.

\end{thebibliography}

\end{document}